\theoremstyle{plain}% default
\newtheorem{thm}{Theorem}
\newtheorem{lem}[thm]{Lemma}
\newtheorem{prop}[thm]{Proposition}
\theoremstyle{definition}
\newtheorem{defn}{Definition}
\definecolor{ngreen}{rgb}{0.1,0.5,0.1}%272
\newcommand{\rem}[1]{}
\newcommand{\tr}{{\textrm{tr}}}
\begin{document}

\title{Generalized Bell Inequality Experiments and Computation}

\author{Matty J. Hoban}
\affiliation{Department of Physics and Astronomy, University College London, Gower Street, London WC1E 6BT, United Kingdom.}
\affiliation{Department of Computer Science, University of Oxford, Wolfson Building, Parks Road, Oxford OX1 3QD, United Kingdom.}
\author{Joel J. Wallman}
\affiliation{School of Physics, The University of Sydney, Sydney, New South Wales 2006, Australia.}
\author{Dan E. Browne}
\affiliation{Department of Physics and Astronomy, University College London, Gower Street, London WC1E 6BT, United Kingdom.}

\date{\today}
\begin{abstract}
\noindent
We consider general settings of Bell inequality experiments with many parties, where each party chooses from a finite number of measurement settings each with a finite number of outcomes. We investigate the constraints that Bell inequalities place upon the correlations possible in a local hidden variable theories using a geometrical picture of correlations. We show that local hidden variable theories can be characterized in terms of limited computational expressiveness, which allows us to characterize families of Bell inequalities. The limited computational expressiveness for many settings (each with many outcomes) generalizes previous results about the many-party situation each with a choice of two possible measurements (each with two outcomes). Using this computational picture we present generalizations of the Popescu-Rohrlich non-local box for many parties and non-binary inputs and outputs at each site. Finally, we comment on the effect of pre-processing on measurement data in our generalized setting and show that it becomes problematic outside of the binary setting, in that it allows local hidden variable theories to simulate maximally non-local correlations such as those of these generalised Popescu-Rohrlich non-local boxes.
\end{abstract}

\maketitle

\section{Introduction}\label{sec1}

\noindent
Quantum mechanics is incompatible with a classical view of the world in many ways. In particular, quantum theory is incompatible with the assumption of local realism as there are entangled quantum states and measurements that lead to violations of a Bell inequality \cite{bell}. Consequently, Bell inequalities provide an extremely clear distinction between classical, local hidden variable (LHV), and non-classical (e.g., quantum) theories and so have been studied with great fervor since their discovery. 

Another motivation for the study of Bell inequalities has been from an informational point-of-view. Quantum information has brought new insight into the very nature of quantum mechanics and Bell inequalities have been used to gain insight into the information processing power of quantum mechanics. Bell inequality violations have been used to guarantee the security of quantum key distribution \cite{qkd}, generate randomness \cite{random}, give an advantage in communication complexity \cite{comm} and non-local games \cite{non-localgame}. Recently, connections have been made between quantum computing and a violation of a Bell inequality as giving some computational advantage \cite{anders, hoban, hoban2}. 

Despite their utility, there are many open questions about Bell inequalities. Many breakthroughs have been made by constructing Bell inequalities for many parties each with a choice of two measurements and two measurement outcomes \cite{chsh,ww,zb}. However, constructing Bell inequalities (that fully define LHV correlations) in general is NP-hard in the number of measurement choices and measurement outcomes at each site \cite{pitowski}.

Generalizations of the Bell inequality experiment away from the two measurement setting, two measurement outcome scenario have been studied and have led to the discovery of interesting phenomena \cite{vertesi,zukowski,beamsplitter,highdem,collins,cglmp}. For example, quantum violations can be greater \cite{cglmp} and more robust to experimental imperfections \cite{highdemloophole} in these general scenarios.% Could the computational benefit from Bell inequality violations generalize or even become greater in this scenario? 

\begin{figure}
	\centering
		\includegraphics[width=0.49\textwidth]{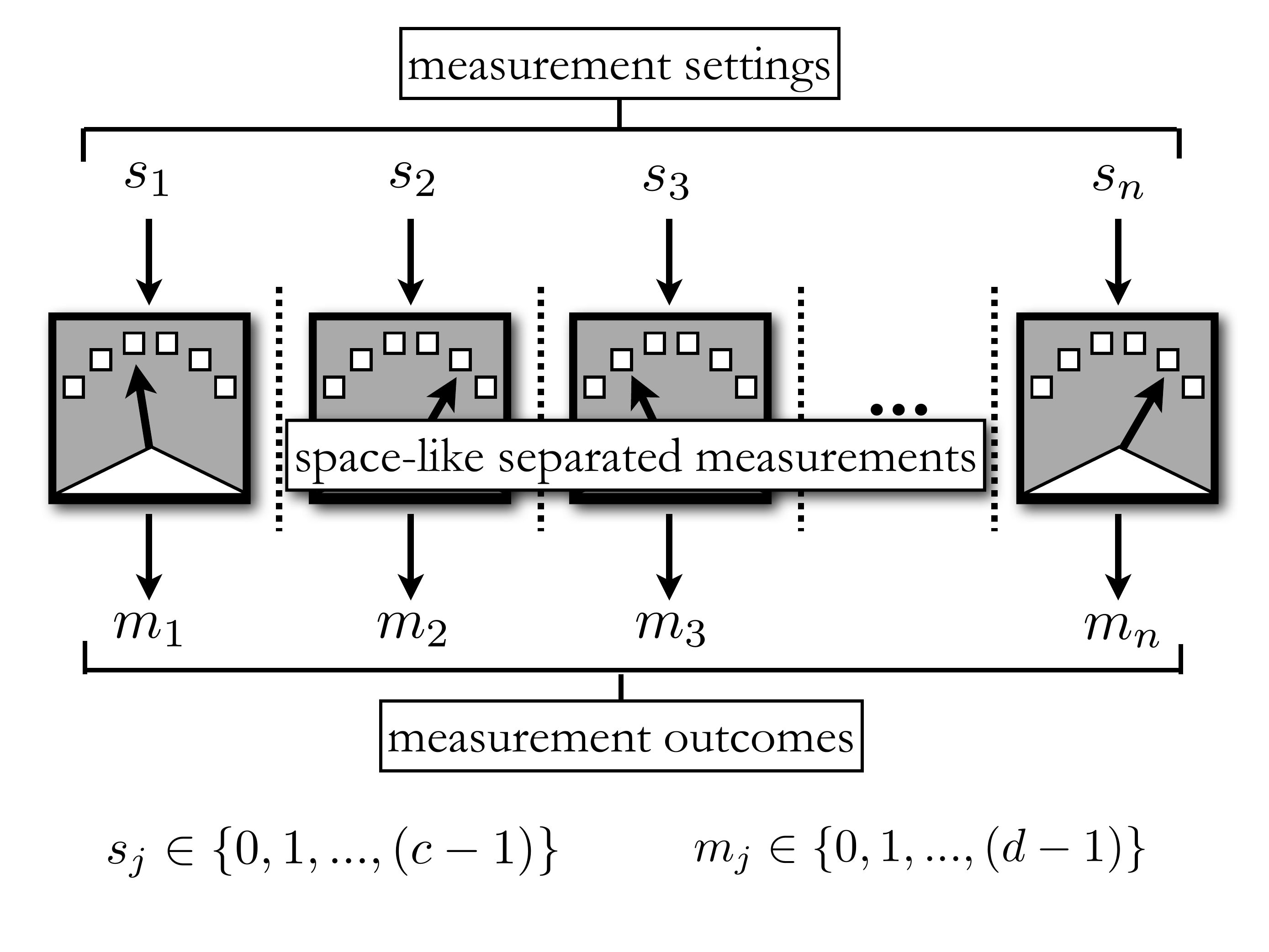}
			\caption{In a single run of a general Bell experiment, $n$ parties each make a measurement from $c$ possible choices, where each measurement has $d$ possible outcomes. Labelling the $j$th parties' measurement choice and outcome by $s_j$ and $m_j$ respectively, we can describe each run of the experiment with $n$-digit strings $\textbf{m}$ and $\textbf{s}$.}
			\label{fig:fig1}
\end{figure}

%non-signaling
The Clauser-Horne (CH) inequality \cite{ch} also showed that the full probability distribution (including marginal probabilities) can be constrained in LHV theories and violated by quantum mechanics. However, quantum theory does not obtain the maximal violation of the CH inequality possible in theories that do not allow superluminal signaling (non-signaling for short), as demonstrated by Popescu-Rohrlich boxes \cite{pr}. As a result, much work has gone into characterizing the space of non-signaling probability distributions \cite{barrett2}. An approach based around convex polytopes has been particularly fruitful but the space of all non-signaling probability distributions remains complicated and often counter-intuitive, as demonstrated by the Guess Your Neighbor's Input non-local game \cite{gyni}. 

The convex polytope of non-signaling probability distributions in the CH inequality setting is well-studied and is an excellent platform for describing quantum correlations via an information theoretic, or physical principle \cite{cc,ic,mac,unc}. For more general settings, it is difficult to describe the space of non-signaling probabilities. To avoid this difficulty yet still obtain some intuition about the space of non-signaling probabilities, we consider the space of correlators, i.e., the space of probability distributions over joint outcomes.

For two-setting, two-outcome Bell inequalities, the correlations studied in Bell experiments can be described in terms of stochastic Boolean maps, e.g. probabilistic mixtures of Boolean functions from the two bits describing the measurement settings to the parity of the measurement outcomes, which is a single bit. This approach has proven well-suited to the study of many-body generalizations of the Clauser-Horne-Shimony-Holt (CHSH) inequality and led to a number of unifications and new insights \cite{anders, hoban, hoban2}.

In this paper, we generalize this approach by replacing stochastic Boolean maps with stochastic maps from the set of measurement settings to a single digit representing the parity of the output digits. Our generalization reveals new phenomena which do not occur in the simpler two-setting, two-outcome case.

Hand-in-hand with this ``computational'' description of correlations we shall use a geometric approach. The list of conditional probabilities describing the stochastic map can also be treated as a vector in a real vector space \cite{froissart,peres, pitowski}. In this picture, the valid sets of correlations which occur in a theory form a convex region in this space, which, in the case of LHV theories, is a polytope. A polytope can be defined as the intersection of the half-spaces that satisfy a set of linear inequalities, which, for the LHV polytope, are simply the facet Bell inequalities.

Very recently, two of us showed that for 2-setting 2-outcome Bell inequalities, computational expressiveness provides an elegant method for studying Measurement-based Quantum Computing \cite{hoban, hoban2}. We shall show how such approaches can be generalized, and that the switch from Boolean to more general maps allows more complex behavior than in the binary case.

In Section \ref{sec2}, we outline the general framework, fix notation and introduce the modular arithmetic which will be employed throughout the paper. In Section \ref{sec3} we construct Bell inequalities in this framework. In Section \ref{sec4} we study how methods from \cite{hoban} and \cite{hoban2} can be generalized and then summarize in Section \ref{sec5}. 

\section{Framework for general CHSH Tests}\label{sec2}

\noindent
The general scenario we consider is the $n$-partite Bell experiment shown in Fig. \ref{fig:fig1}, where $n>1$. For simplicity, we assume all space-like separated parties choose from $c$ possible measurement settings, each with $d$ possible outcomes, so we label an experiment by the variables $(n,c,d)$.

As in the standard CHSH experiment, we assume that each party's measurement setting is chosen randomly from a uniform distribution and is uncorrelated with the state of the system (see \cite{bell2,kofler,barrett,hall} for some consequences of dropping this assumption).

We label the $c$ possible measurements and $d$ possible measurement outcomes by digits $s_j\in\{0,\ldots,c-1\}\in\mathbb{Z}_{c}$ and $m_j\in\mathbb{Z}_{d}$ respectively. Therefore the $n$ settings and measurements for each repetition of the experiment are labeled by $n$ digit-strings $\textbf{m}\in\mathbb{Z}_{d}^{n}$ and $\textbf{s}\in\mathbb{Z}_{c}^{n}$ corresponding to all $n$ measurement outcomes and all $n$ settings respectively, where $\textbf{x}$ denotes the $n$-digit string such that the $j$-th digit is $x_j$.

To simplify the analysis we shall assume initially that both $c$ and $d$ are prime. We discuss features of the analysis in the non-prime case in appendix \ref{app2}. We will employ arithmetic in both modulo $c$ and modulo $d$. As we are considering correlators obtained by adding measurement outcomes, the majority of the arithmetic will be modulo $d$. Therefore we use the following convention. In all expressions, arithmetic will be modulo $d$, except in exceptional cases where the arithmetic is modulo $c$, which will be denoted with large square brackets and subscript $c$, i.e. $[\cdots]_c$.

After the data $\textbf{m}$ and $\textbf{s}$ has been collected from many repetitions of the experiments, conditional probabilities $p(\textbf{m}|\textbf{s})$ for each value of $\textbf{m}$ and $\textbf{s}$ can be calculated. In the initial CHSH paper, and in many-body generalizations, rather than studying this full probability distribution, one merely considers the statistics of the parity of the output $\textbf{m}$. Here, we shall take a similar course and study correlators
\begin{equation}\label{correlatorsdefined}
p(k|\textbf{s})=p(\sum_{j=1}^{n}m_{j}=k|\textbf{s}),
\end{equation}
where $p(\sum_{j=1}^{n}m_{j}=k|\textbf{s})=\sum_{\textbf{m}|\sum_{j=1}^{n}m_{j}=k}p(\textbf{m}|\textbf{s})$. When taking a sum over $m_{j}$ we are performing addition modulo $d$ but when taking sums of probabilities or describing anything that is not related to measurement settings or outcomes we are performing standard, natural arithmetic over the reals. This natural generalization of the CHSH-type correlators in \eqref{correlatorsdefined} also encompasses the well-known Collins-Gisin-Linden-Masser-Popescu (CGLMP) inequality for two parties \cite{cglmp}. These correlators have a well-defined role when considering the expectation values of joint measurements \cite{general}. Moreover, singling out this one parameter simplifies the problem by reducing the dimensionality of the problem, whilst still capturing interesting properties of the full distribution $p(\textbf{m}|\textbf{s})$.

In Section \ref{sec2a} we introduce the mathematical framework for describing these correlators, which employs a correspondence between conditional probabilities and stochastic maps. In Section \ref{sec2b} we apply this framework and derive the full family of correlators that are possible in LHV theories. Then, in Section \ref{sec2c} we consider correlators in more general non-signaling theories and derive generalizations of the Popescu-Rohrlich non-local box \cite{pr}.

\subsection{Correlators as Stochastic Maps}\label{sec2a}

\noindent
Before we begin our study of correlators, we need to introduce some notation and mathematical construction. The correlators $p(k|\textbf{s})$ are maps from $n$ digit-strings $\textbf{s}\in \mathbb{Z}_{c}^{n}$ to probability distributions over a single digit $k\in \mathbb{Z}_{d}$. Any stochastic map is a convex combination of deterministic processes, which in this case are functions $f:\mathbb{Z}_{c}^{n}\rightarrow\mathbb{Z}_{d}$.

Such functions take an ``input" $\textbf{s}\in \mathbb{Z}_{c}^{n}$ to a single ``output" $k\in\mathbb{Z}_{d}$. Since $\mathbb{Z}_{d}$ and $\mathbb{Z}_{c}^{n}$ are fields for $c$, $d$ prime, these functions can be represented as elements of a vector space. There are many bases one could choose to represent the function. Perhaps the simplest and cleanest basis is the set of Kronecker delta functions $\delta^{\textbf{y}}_{\textbf{s}}$, where $\textbf{y}\in \mathbb{Z}_{c}^{n}$, which equals 1 when $\textbf{y}=\textbf{s}$ and 0 otherwise. Any function can be represented as
\begin{equation}\label{genfunc}
f(\textbf{s})=\sum_{\textbf{y}\in\mathbb{Z}^{n}_{c}}f(\textbf{y})\delta^{\textbf{s}}_{\textbf{y}}.
\end{equation}
The $d^{c^n}$ possible coefficients correspond to $d^{c^n}$ functions. It will sometimes be convenient for the basis set to include the constant function. We shall then replace the delta function $\delta^{\mathbf{0}}_{s}$ where $\mathbf{0}_c$ is the all-zeros string with the constant function to write
\begin{equation}\label{eqn:f_delta}
f(\textbf{s})=\left(\sum_{\textbf{y}\in(\mathbb{Z}^{n}_{c}-\mathbf{\textbf{0}_{c}})}\epsilon_\textbf{y}\delta^{\textbf{s}}_{\textbf{y}}\right)+\alpha,
\end{equation}
where $\epsilon_\textbf{y}= (f(\textbf{y})-\alpha)$ and $\alpha\in\mathbb{Z}_{d}$. 

The natural expression for the computational power of correlators in a general theory is how close they can approximate an arbitrary function. Therefore we want to be able to rewrite \eqref{eqn:f_delta} as a polynomial. This can be done as the Kronecker delta $\delta^{\textbf{s}}_{\textbf{y}}$ can be written as a polynomial (modulo $c$),
\begin{eqnarray}\label{delta}
\delta^{\textbf{s}}_{\textbf{y}}&=&\prod_{j=1}^{n}\delta^{s_{j}}_{y_{j}}=\prod_{j=1}^{n}\left[1-(s_{j}-y_{j})^{(c-1)}\right]_{c},\nonumber \\
&=&\prod_{j=1}^{n}\left[1-\sum_{l=0}^{(c-1)}(-1)^{l}{{c-1}\choose{l}}(y_{j})^{l}(s_{j})^{c-(l+1)}\right]_{c}. \nonumber \\
\end{eqnarray}
The first line follows from Fermat's little theorem, as $a^{b-1} \equiv 1$ mod $b$ if $a$ is non-zero and $b$ is prime. To reach the second line we have used the binomial theorem which follows from the distributivity of modular arithmetic. 
This shows us that any function $f(\textbf{s})$ can be expressed as a polynomial with mixed modular arithmetic systems. Furthermore, when $d\ge c$ we can write the function solely in terms of modulo $d$ arithmetic since equation \eqref{delta} holds for any prime $\ge c$. 

In this paper, the following class of functions, which we term the ``$n$-partite linear functions'', will be crucial in the study of LHV theories.

\begin{defn}\label{defn:local-function}
A function $f(\textbf{s}):\mathbb{Z}^{n}_{c}\rightarrow\mathbb{Z}_{d}$ is an $n$-\textbf{partite linear function} if it can be expressed as
\begin{equation}
f(\textbf{s})=\sum_{j=1}^{n}g_{j}(s_{j}),
\end{equation}%
where addition is modulo $d$ and the $g_{j}(s_{j})$ are functions $\mathbb{Z}_{c}\rightarrow\mathbb{Z}_{d}$ of a single variable. If $f(\textbf{s})$ is not an $n$-partite linear function, we refer to it as a \textbf{non-}$n$\textbf{-partite linear function}.
\end{defn}

These functions will play a role analogous to that of the linear Boolean functions in \cite{hoban,hoban2}. The $n$-partite linear functions can also be written as
\begin{equation}\label{linear}
f(\textbf{s})=\alpha+\sum_{j=1}^{n}\sum_{a=1}^{(c-1)}\beta_{j,a}\delta^{s_{j}}_{a},
\end{equation}
with $\alpha$, $\beta_{j,a}\in\mathbb{Z}_{d}$. There are $d^{n(c-1)+1}$ $n$-partite linear functions. For $c$ being prime, the key feature of a non-$n$-partite linear function is the presence of cross-multiplicative terms in \eqref{delta} between different digits $s_{j}$ of $\textbf{s}$. 

For non-prime $c$, we cannot write delta functions in the neat polynomial of \eqref{delta} but in a more elaborate fashion (see appendix \ref{app2}) but we can still work in terms of the delta functions for non-prime $c$. Therefore, the definition of an $n$-partite linear function in \eqref{linear} applies for non-prime $c$. Thus for prime and non-prime $c$, the delta functions $\delta^{s_{j}}_{a}$ only singularly define a value of $f(\textbf{s})$ when there is only one non-zero element in $\textbf{s}$. For other values of $\textbf{s}$ there is addition modulo $d$ between the terms $\delta^{s_{j}}_{a}$. Equivalently then, for $f(\textbf{s})$ to be a non-$n$-partite linear function, there must be at least one delta function $\delta^{\textbf{s}}_{\textbf{y}}$ in \ref{eqn:f_delta} with $\textbf{y}$ being a digit-string with more than one non-zero element. A function $f(\textbf{s})$ can be written in terms of an $n$-partite linear part plus a non-$n$-partite linear part (with addition modulo $d$), and if the $n$-partite linear part is zero, then $f(\textbf{s})=0$ for all $\textbf{s}$ with only one non-zero element.

Finally, we note that the above treatment also applies in a more general setting where instead of setting digit-strings $\textbf{s}\in\mathbb{Z}_{c}^{n}$ we now have digit-strings $\textbf{s}\in\mathbb{Z}_{c_{1}}\times\mathbb{Z}_{c_{2}}\times...\times\mathbb{Z}_{c_{n}}=\bigoplus_{j=1}^{n}\mathbb{Z}_{c_{j}}$. That is, the number of possible settings, now labeled $c_{j}$ at each $j$th site could now be different from other sites and also non-prime. The string $\textbf{s}$ is now a direct sum (or Cartesian product) of the registers $\mathbb{Z}_{c_{j}}$ for each site. All delta functions as defined can be carried over to this general setting. Thus single site maps and $n$-partite linear functions are exactly the same as \eqref{linear} but for the $j$th site, $c$ is replaced with $c_{j}$. We now discuss a geometrical picture of the correlators $p(k|\textbf{s})$.

The correlators $p(k|\textbf{s})$ are stochastic maps and so must satisfy the positivity inequalities, i.e., the set of $c^{n}(d-1)$ linear inequalities $p(k|\textbf{s})\geq 0$, for all $k$, $\textbf{s}$. Furthermore, as some outcome always occurs for any choice of measurement settings $\textbf{s}$, the correlators must also satisfy the normalization equations, $\sum_{k\in\mathbb{Z}_d}p(k|\textbf{s})=1$ for all $\textbf{s}$. We then only have $(d-1)$ independent correlators for each value of $s$, so we omit the correlator $p(0|\textbf{s})$ and treat the remaining correlators as elements of a vector in a $c^{n}(d-1)$-dimensional real space with elements being $p(k|\textbf{s})$ for $k\neq 0$. 

The space of correlators satisfying the linear inequalities of the positivity conditions $p(k|\textbf{s})\geq 0$ and normalisation conditions in this reduced space, $\sum_{k\neq 0}p(k|\textbf{s})\leq 1$, is then a convex polytope labeled $\mathcal{P}$. It is a convex polytope as it is the intersection of the half-spaces defined by these linear inequalities. In any specific theory, the region of allowed correlators, $\mathcal{T}$, will be a convex subregion of $\mathcal{P}$. Equivalently, $\mathcal{P}$ can be defined as the convex hull of all deterministic correlators $p(k|\textbf{s})\in\{0,1\}$, for all ${k}$ and $\textbf{s}$; these deterministic correlators are the extreme points of $\mathcal{P}$.

If $\mathcal{T}$ has a finite number of extreme points (which is true for LHV theories but not for quantum theory), then $\mathcal{T}$ will be a convex polytope and so can be described as the intersection of the half-spaces defined by a set of ``facet-defining'' linear inequalities. Facet-defining inequalities are defined in terms of affinely independent points, where a set $\mathcal{S}$ of $K$ vectors $\vec{p}_{i}$, $\mathcal{S}=\{\vec{p}_{0},\vec{p}_{1},...,\vec{p}_{(K-1)}\}$ is \emph{affinely independent} if for every $\vec{p}_{k}\in\mathcal{S}$, the $(K-1)$ vectors in the set $\{\vec{p}_{i}-\vec{p}_{k}|\vec{p}_{i}\neq\vec{p}_{k}\}$ are linearly independent \cite{grun}.

\begin{defn}\label{defn:facet-defining}
A linear inequality is \textbf{facet-defining} for a convex polytope in $\mathbb{R}^{D}$ when at least $D$ affinely independent extreme points saturate the inequality (i.e. satisfy the equality of the linear inequality).
\end{defn}

Having constructed a framework for stochastic maps (and equivalently, conditional probabilities) we can immediately apply this to the correlators studied in this paper. Recall that we will focus our attention on correlators $p(k|\textbf{s})$ representing the probability distribution on $k=\sum_{j=1}^{n} m_{j}$, the sum modulo $d$ of the outputs. The correlators have the form of a conditional probability and thus a stochastic map, and therefore are characterized by vectors in the space $\mathcal{P}_{n,c,d}$. The space $\mathcal{P}_{n,c,d}$ represents the set of all correlators possible in principle, but for certain theories, not all correlators will be permitted. The focus of our study is then the regions of correlators which are accessible in LHV theories and in quantum theories. The region of LHV correlators is itself a polytope, and the facet-defining inequalities represent the facet-defining Bell inequalities\cite{pitowski, ww, zb,fine}. This is the geometrical picture of Bell inequalities that we will consider in the next subsection. We now study and characterize the region of correlators in LHV theories.

\subsection{Local Hidden Variable Theories}\label{sec2b}

\noindent
LHV theories are theories in which the measurement outcomes at each site can depend on the input setting of the measurement. The only other parameter the outcome depends on is an objective ``hidden variable'' $\lambda\in\Lambda$ which we assume is shared by all parties, and where $\Lambda$ is the (often continuous) set which defines the possible values for this variable. We do not assume that $\lambda$ is deterministic, but allow it to satisfy a probability distribution $p(\lambda)d\lambda$ over the space $\Lambda$ such that $\int_{\Lambda}p(\lambda)d\lambda=1$ . Without loss of generality \cite{fine}, we assume that apart from the probability distribution over $\lambda$ the internal workings of each measurement device is deterministic. This, via a standard argument, results in conditional probabilities for the set of outputs $\textbf{m}$ of the following form,
\begin{equation}
p(\textbf{m}|\textbf{s})=\int_{\Lambda}d\lambda p(\lambda)\prod_{j=1}^{n}p(m_{j}|s_{j},\lambda).
\end{equation}
Since $\lambda$ is the only non-deterministic element to consider, we can study the LHV correlations as the convex combination of the set of possible deterministic maps. We now prove that the map from measurement settings to correlators in any LHV theory is a probabilistic combination of $n$-partite linear functions.

\begin{thm}
In an $n$-party generalized Bell experiment, with $c$ settings and $d$ outputs for each measurement, the region $\mathcal{L}$ of correlators accessible in a LHV theory is the convex hull of the set of $n$-partite linear functions.
\end{thm}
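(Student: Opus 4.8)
The plan is to show the two inclusions that constitute the claimed equality: first that every extreme deterministic LHV correlator is an $n$-partite linear function, and then that every $n$-partite linear function arises as a deterministic LHV correlator. Combined with the fact that the LHV region $\mathcal{L}$ is precisely the convex hull of its deterministic (extreme) points — which follows from the integral representation $p(\textbf{m}|\textbf{s})=\int_\Lambda d\lambda\, p(\lambda)\prod_j p(m_j|s_j,\lambda)$ stated just above, since $\lambda$ is the only stochastic ingredient — this gives the result.

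For the forward direction, I would start from a single deterministic hidden-variable value $\lambda$. Determinism means each party's device implements a function $m_j = f_j^{(\lambda)}(s_j)$ with $f_j^{(\lambda)}:\mathbb{Z}_c\to\mathbb{Z}_d$. The induced correlator then sends $\textbf{s}$ to the point mass on $k=\sum_{j=1}^n f_j^{(\lambda)}(s_j)$ (sum modulo $d$), i.e. the correlator is the deterministic map $p(k|\textbf{s})=\delta^{k}_{f(\textbf{s})}$ where $f(\textbf{s})=\sum_{j=1}^n f_j^{(\lambda)}(s_j)$. By Definition \ref{defn:local-function}, $f$ is an $n$-partite linear function, with each $g_j = f_j^{(\lambda)}$. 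So every deterministic LHV correlator is (the indicator of the graph of) an $n$-partite linear function, and taking convex combinations over $p(\lambda)$ keeps us inside the convex hull of these functions.

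For the converse, given an arbitrary $n$-partite linear function $f(\textbf{s})=\sum_j g_j(s_j)$, I would simply exhibit an LHV model realizing it deterministically: take $\Lambda$ a single point, and let party $j$ output $m_j=g_j(s_j)$. Then $\sum_j m_j = f(\textbf{s})$ with certainty, so the resulting correlator is the extreme point of $\mathcal{P}_{n,c,d}$ associated to $f$. Hence every $n$-partite linear function is an achievable LHV correlator, and therefore so is every convex combination of them.

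I expect the only genuine subtlety — and hence the step to state carefully rather than the computational obstacle — is the bookkeeping between two levels of description: an LHV correlator is a priori a convex combination of distributions on $k$, not a single function, so one must be precise that the \emph{extreme points} of $\mathcal{L}$ are exactly the deterministic correlators, that a deterministic correlator on $k$ is the same data as a function $\mathbb{Z}_c^n\to\mathbb{Z}_d$, and that independence across parties forces that function to be of the additively-separable form in Definition \ref{defn:local-function}. One should also note that different hidden variables may yield the same function $f$, and that the map $f\mapsto$ (its graph-indicator correlator) is injective, so the extreme points of $\mathcal{L}$ are in bijection with the $n$-partite linear functions; the convex-hull statement then follows immediately. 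No estimates or modular-arithmetic identities from \eqref{delta} are needed here — those are only required later for the polynomial/expressiveness reformulation.
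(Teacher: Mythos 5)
Your argument is correct and follows essentially the same route as the paper's own proof: identify the extreme points of $\mathcal{L}$ with the deterministic correlators, observe that local determinism forces single-site maps $g_j(s_j)$ whose modulo-$d$ sum is an $n$-partite linear function, and take the convex hull. Your version is somewhat more explicit than the paper's in spelling out the converse inclusion (that every $n$-partite linear function is realizable by a trivial one-point LHV model), but this is a matter of detail rather than a different approach.
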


\begin{proof}
The space $\mathcal{L}$ of correlators is a convex polytope, we only need to find the extreme points of the polytope, or deterministic correlators, and take their convex hull \cite{fine}. This means we just need to find the deterministic functions possible with LHV correlators. These deterministic functions correspond to adding the measurement outcomes of space-like separated sites, which, for a LHV theory, is deterministic if and only if the maps from the measurement setting $s_j$ to the measurement outcome $m_j$ at each site is deterministic, i.e., of the form $g_j(s_j)$ in Definition \ref{defn:local-function} for all $j$. Adding such functions modulo $d$ gives a $n$-partite linear function.
\end{proof}

\begin{figure}
	\centering
		\includegraphics[width=0.45\textwidth]{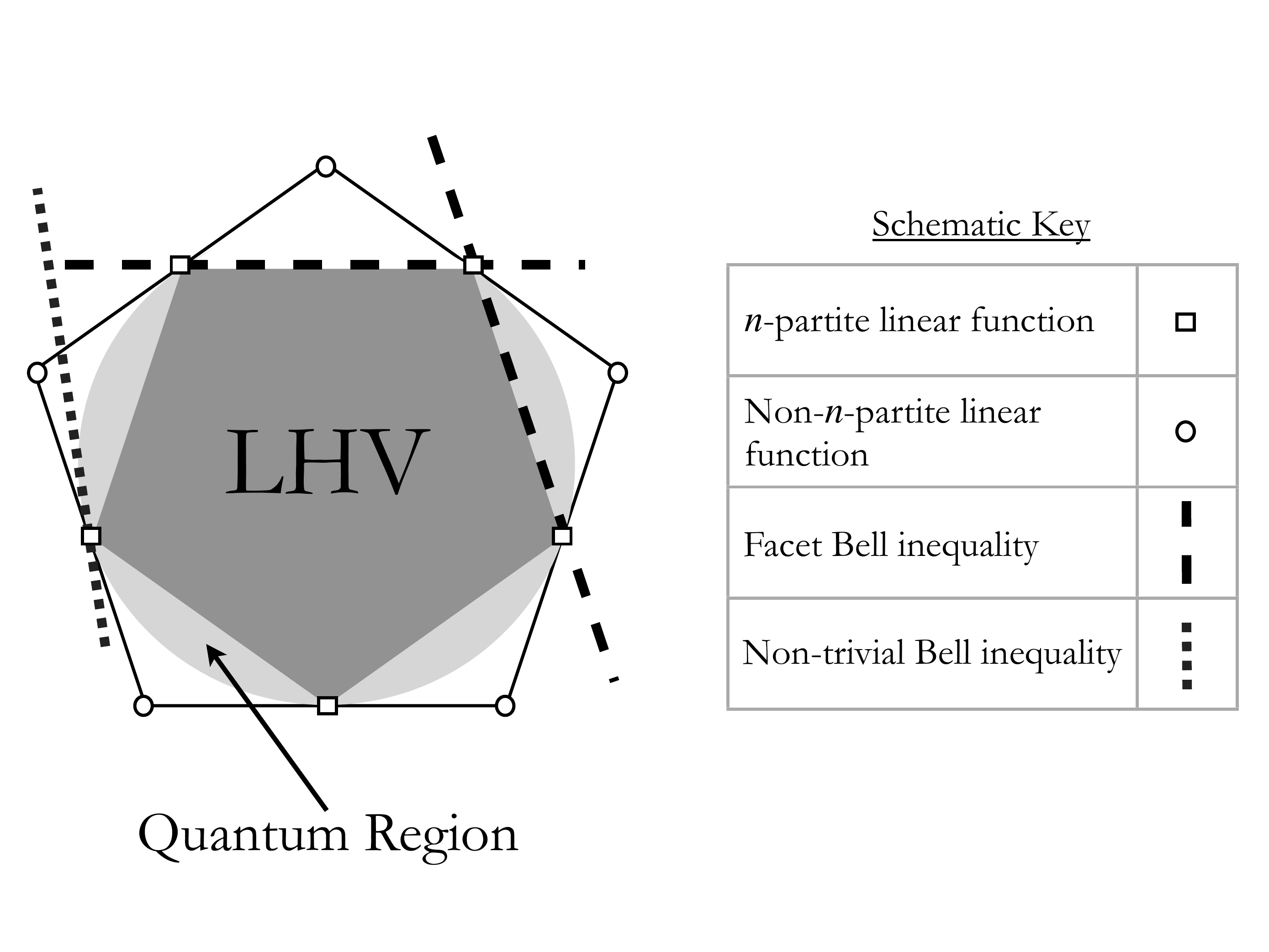}
			\caption{Schematic for correlators in LHV theories versus more general theories (including quantum mechanical correlators). The shapes represent the convex polytopes of general correlators and LHV correlators, i.e. the space $\mathcal{L}$. The points on each shape are the deterministic maps corresponding to the particular functions listed. We have shown the region of quantum correlators lying between the convex polytope of $n$-partite linear functions and the bigger convex polytope of all possible (including non-$n$-partite linear) maps. The facet Bell inequalities define the region $\mathcal{L}$ of LHV correlators whereas the non-trivial Bell inequalities (as mentioned in section \ref{sec3}) just bound the region. It is preferable but very hard \cite{pitowski} to find facet Bell inequalities rather than just non-trivial ones.}
			\label{fig:fig3}
\end{figure}

This theorem generalizes the theorem valid for the (n,2,2) case presented by two of us in \cite{hoban2}. It shows that $\mathcal{L}$ is compactly characterized by considering it in terms of the computational expressiveness of LHV correlators (see Fig. \ref{fig:fig3} for a schematic of the consequences of Theorem 1). In other words, if one considers the Bell experiment as a computation, LHV theories allow the computation of $n$-partite linear functions and nothing more. From this perspective, LHV theories have limited computational power, and thus a theory that allows correlators outside this region can be perceived as having a \emph{computational advantage} in a Bell experiment.

As mentioned above, there is a dual description of a convex polytope in terms of linear inequalities. The convex polytope of LHV correlators $\mathcal{L}$ is the intersection of the half-spaces described by the set of facet-defining Bell inequalities (or facet Bell inequalities for short). In Fig. \ref{fig:fig3} we illustrate the different types of Bell inequalities which one can construct, and the terminology to classify them. Bell inequalities that are equivalent to normalization or positivity conditions (i.e the boundaries of $\mathcal{P}$) are referred to as trivial Bell inequalities. There remain Bell inequalities that are neither necessarily facet-defining or trivial, which we call \emph{non-trivial} Bell inequalities; they are non-trivial because they indicate a separation between the polytopes $\mathcal{P}$ and $\mathcal{L}$ by bounding the latter. 

If a correlator is outside $\mathcal{L}$ then it must \emph{violate} one of the facet Bell inequalities. For example, it is well-known that some quantum correlators violate a Bell inequality \cite{bell}. Also if we observe a violation of a non-trivial inequality then we know that it is outside of $\mathcal{L}$. To recapitulate, for either non-trivial or facet Bell inequalities, \textit{we associate a violation with a computational advantage} and in Section \ref{sec3} we consider Bell inequalities from this computational perspective.

Before investigating Bell inequalities, it is worth considering the relation between $p(k|\textbf{s})$ and $p(\textbf{m}|\textbf{s})$. Since the latter is more general than the former, we are excluding interesting phenomena by only considering correlators. Examples of interesting phenomena in the full $p(\textbf{m}|\textbf{s})$ setting include the $I_{3322}$ inequality \cite{collins} which has interesting implications with regards to quantum correlations; another example is the Guess Your Neighbor's Input \cite{gyni} non-local game which results in some facet Bell inequalities that are not violated by any quantum correlators. In spite of these examples, we argue in the next subsection that the correlators $p(k|\textbf{s})$ capture many important phenomena of full non-signaling probability distributions. In particular, we show that particular deterministic correlators can be uniquely associated with a single non-signaling distribution, that being a many-body generalization of the PR non-local box \cite{pr}.

\subsection{Non-signaling theories}\label{sec2c}

\noindent
In this section, we shall consider general non-signaling theories. These are theories in which there are no constraints on correlations other than the no-signaling conditions. First, note that given any deterministic correlator $p(k=f(\textbf{s})|\textbf{s})$ where $k=\sum_{j} m_{j}$, there exists a non-signaling distribution such that $p(m_{j}|s_{j})=\frac{1}{d}$ and the non-signaling distribution results in the correlator. This is an immediate consequence of Theorem \ref{thm:generalized-pr}, proven below, but can be understood intuitively since every possible value of $m_{j}$ is consistent with one of more output strings $\textbf{m}$ where $\sum_{j} m_{j}=f(\textbf{s})$. One can then take a suitable mixture of these strings such that every output digit is maximally uncertain. Since a maximally uncertain output carries no information about $\textbf{s}$, the no-signaling condition is automatically satisfied.

We denote the space of probability distributions $p(\textbf{m}|\textbf{s})$ that satisfy the no-signaling conditions by $\mathcal{NS}_{n,c,d}$ (the indices will normally be omitted for clarity). This space is a convex polytope as it is the intersection of the half-spaces defined by the linear inequalities being the normalization, positivity and non-signaling inequalities (see \cite{barrett2} for details). Finding the vertices of this polytope is the dual problem of finding the facet-defining linear inequalities corresponding to some vertices (e.g. finding the facet Bell inequalities) which is an NP-hard problem. We now discuss some of the vertices of $\mathcal{NS}$ without resorting to this vertex enumeration problem in the following discussion. 

A famous example of a non-signaling distribution which cannot be simulated by LHV or quantum correlators for two spatially separated systems is the PR non-local box. The PR box is defined as a black box that results in measurement outcomes according to the probability distribution,
\begin{equation}
p(m_{1},m_{2}|s_{1},s_{2}) = 
\begin{cases} \frac{1}{2} & \text{if $m_{1}+ m_{2}=s_{1}s_{2}$,}
\\
0 &\text{otherwise.}
\end{cases}
\end{equation}
The PR box is described by the correlator $p(k=s_{1}s_{2}|\textbf{s})=1$. In other words, the PR box is associated with a correlator in $\mathcal{P}_{2,2,2}$. This correlator is a vertex of $\mathcal{P}_{2,2,2}$ corresponding to the deterministic property that the parity of outputs $m_{1}$ and $m_{2}$ is equal to the product of the inputs $s_{1}$ and $s_{2}$. Furthermore, the PR box is the only distribution in $\mathcal{NS}_{2,2,2}$ which is compatible with the correlator $p(k=s_{1}s_{2}|\textbf{s})=1$. In other words the vertex correlator $p(k=s_{1}s_{2}|\textbf{s})=1$ in $\mathcal{P}$ \emph{uniquely defines} a full probability distribution in $\mathcal{NS}$. 

In this section, we will show that this property is shared by many vertices in $\mathcal{P}$. These vertices are only compatible with a single probability distribution in $\mathcal{NS}$. In particular, for $n=2$, every correlator in $\mathcal{P}$ corresponding to a non-$n$-partite linear function is only compatible with one probability distribution in $\mathcal{NS}$.

We emphasize that the full set of vertices of $\mathcal{NS}$ for arbitrary $c$, $d$ and $n$ has never been characterized, and its numerical computation is an NP hard problem (the vertex enumeration problem) \cite{barrett2}. For example, in the $(3,2,2)$ setting, the space NS has been shown to have a complicated structure \cite{nspoly}. Therefore, we are able to capture a large amount of structure without resorting to optimization. 

Before we define this structure we need to introduce a new class of functions $f(\textbf{s}):\mathbb{Z}_{c}^{n}\rightarrow\mathbb{Z}_{d}$ that we will only use in this subsection to find a \emph{subset} of the vertices of $\mathcal{NS}$. We call these functions ``bi-partite linear" as they correspond to functions for which, if the observers gather into two groups at two space-like separated sites, the measurement outcome results from an $n$-partite linear function for $n=2$.

\begin{defn}\label{defn:bipartition}
A \textbf{bipartition}, $\{A,B\}$ of the set of $n$ parties is a division of the set $\{1,2,\ldots,n\}$ into two disjoint, non-empty sets $A$ and $B$ such that $A\cup B = \{1,2,\ldots,n\}$.
\end{defn}

\begin{defn}\label{defn:bi-local}
A function $f(\textbf{s}):\mathbb{Z}_{c}^{n}\rightarrow\mathbb{Z}_{d}$ is \textbf{bi-partite linear} if for any bipartition $\{A,B\}$, $f(\textbf{s})$ can be written as
\begin{equation}\label{bipartitelinear}
f(\textbf{s})=f^{A}(\textbf{s}^{A})+f^{B}(\textbf{s}^{B}),
\end{equation}
where $\textbf{s}^X$ is the $|X|$-digit string with entries $s^X_j = s_{X_j}$ and $f^{X}(\textbf{s}^{X})$ is a function on $\textbf{s}^{X}$.
\end{defn}

We now show that there exists a unique non-signaling distribution corresponding to a vertex $p(k=f(\textbf{s})|\textbf{s})$ of $\mathcal{P}$ if and only if $f(s)$ is not bi-partite linear.

\begin{thm}\label{thm:generalized-pr}
For every function $f(\textbf{s})$, the corresponding vertex $p(k=f(\textbf{s})|\textbf{s})=1$ of $\mathcal{P}$ is compatible with the non-signaling probability distribution
\begin{equation}\label{genbox}
p(\textbf{m}|\textbf{s}) = 
\begin{cases} d^{1-n} & \text{if $\sum_{j=1}^{n}m_{j}=f(\textbf{s})$,}
\\
0 &\text{otherwise.}
\end{cases}
\end{equation}
Furthermore, this is the only non-signaling distribution compatible with $p(k=f(\textbf{s})|\textbf{s})=1$ if and only if $f(\textbf{s})$ is not bi-partite linear.
\end{thm}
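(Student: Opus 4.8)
\emph{Proof proposal.} The plan is to establish the three parts of the statement in turn: that the distribution in \eqref{genbox} is non-signaling, that it reproduces the correlator $p(k=f(\textbf{s})|\textbf{s})=1$, and that it is the unique such non-signaling distribution exactly when $f$ is not bi-partite linear. The first two are routine. Compatibility is immediate, since summing \eqref{genbox} over the $d^{n-1}$ strings $\textbf{m}$ with $\sum_j m_j=k$ gives $d^{n-1}\cdot d^{1-n}=1$ if $k=f(\textbf{s})$ and $0$ otherwise. For the no-signaling property I would check that the marginal of \eqref{genbox} on any proper subset $S\subsetneq\{1,\dots,n\}$ is uniform on $\mathbb{Z}_d^{|S|}$: for a fixed $\textbf{m}^S$ the number of completions $\textbf{m}^{S^c}$ with $\sum_j m_j=f(\textbf{s})$ is $d^{\,n-|S|-1}$, independent of $\textbf{m}^S$ and of $\textbf{s}$, so the marginal equals $d^{-|S|}$ and in particular does not depend on $\textbf{s}^{S^c}$.

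For the ``only if'' direction, i.e.\ $f$ bi-partite linear $\Rightarrow$ non-uniqueness, I would take a bipartition $\{A,B\}$ with $f(\textbf{s})=f^A(\textbf{s}^A)+f^B(\textbf{s}^B)$ and form the product $q'(\textbf{m}|\textbf{s})=u^A(\textbf{m}^A|\textbf{s}^A)\,u^B(\textbf{m}^B|\textbf{s}^B)$, where $u^X$ is the analogue of \eqref{genbox} built from $f^X$ on the parties in $X$. A product of two non-signaling distributions is non-signaling, and $\sum_j m_j=f^A+f^B=f$ gives the correlator $p(k=f(\textbf{s})|\textbf{s})=1$; but the support of $q'$ has size $d^{|A|-1}d^{|B|-1}=d^{n-2}$, strictly less than the support $d^{n-1}$ of \eqref{genbox} (since $n\ge2$, $d\ge2$), so $q'\neq$ \eqref{genbox}.

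The substance is the ``if'' direction, which I would treat by discrete Fourier analysis in the outcome. With $\omega=e^{2\pi i/d}$, set $\hat q(\textbf{t}|\textbf{s})=\sum_{\textbf{m}}q(\textbf{m}|\textbf{s})\,\omega^{\textbf{t}\cdot\textbf{m}}$ for any non-signaling $q$ with $p(k=f(\textbf{s})|\textbf{s})=1$. The support condition is equivalent to the shift relation $\hat q(\textbf{t}+\lambda\mathbf{1}|\textbf{s})=\omega^{\lambda f(\textbf{s})}\hat q(\textbf{t}|\textbf{s})$ (with $\mathbf{1}=(1,\dots,1)$), and the no-signaling conditions are equivalent to: if $\mathrm{supp}(\textbf{t})$ omits some index, then $\hat q(\textbf{t}|\textbf{s})$ depends on $\textbf{s}$ only through the coordinates in $\mathrm{supp}(\textbf{t})$. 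Distribution \eqref{genbox} is precisely the one with $\hat q(\textbf{t}|\textbf{s})=0$ for every $\textbf{t}\notin\mathbb{Z}\mathbf{1}$, so it suffices to prove this vanishing. Fix such a $\textbf{t}$ and indices $i\neq j$ with $t_i\neq t_j$; using the shift relation to write $\hat q(\textbf{t}|\textbf{s})$ both as $\omega^{t_jf(\textbf{s})}$ times a factor independent of $s_j$ and as $\omega^{t_if(\textbf{s})}$ times a factor independent of $s_i$, one obtains, whenever $\hat q(\textbf{t}|\textbf{s}_0)\neq0$, first that $\hat q(\textbf{t}|\textbf{s})\neq0$ on the whole slice obtained from $\textbf{s}_0$ by varying only $s_i,s_j$, and second that on that slice $f$ is additively separable in $s_i,s_j$, i.e.\ the mixed (second) difference $\Delta_i\Delta_j f$ vanishes there (here $d$ prime is used, to invert $t_i-t_j$). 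Iterating the first fact over unequal-frequency pairs ---and noting that a non-constant $\textbf{t}$ lets one alter any single setting coordinate by such a move--- gives that $\hat q(\textbf{t}|\cdot)$ is either identically zero or nowhere zero; in the latter case $\Delta_i\Delta_j f\equiv0$ for every $i,j$ with $t_i\neq t_j$. Grouping the indices by their $\textbf{t}$-value into parts $P_1,\dots,P_r$ (with $r\ge2$ since $\textbf{t}\notin\mathbb{Z}\mathbf{1}$) and invoking the elementary fact that $f(\textbf{s})=f^A(\textbf{s}^A)+f^B(\textbf{s}^B)$ for a bipartition $\{A,B\}$ iff $\Delta_i\Delta_j f\equiv0$ for all $i\in A,\ j\in B$ (proved by a telescoping argument on the grid of settings), with $A=P_1$ and $B=P_2\cup\dots\cup P_r$ we conclude $f$ is bi-partite linear, a contradiction. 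Hence $\hat q(\textbf{t}|\textbf{s})=0$ off $\mathbb{Z}\mathbf{1}$ and $q$ equals \eqref{genbox}. As a warm-up I would first run $n=2$ by hand: the support confines $q$ to a single line, $q(m_1,m_2|\textbf{s})=r_{s_1}(m_1)\,\delta_{m_1+m_2,\,f(\textbf{s})}$, one no-signaling direction kills the $s_2$-dependence, the other forces all the $r_{s_1}$ to be common translates with mutually consistent shifts, and primality of $d$ then leaves only the uniform choice unless $f$ is $2$-partite linear.

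The hard part will be this last direction, and within it the propagation step: upgrading ``$\hat q(\textbf{t}|\cdot)$ nonzero at one point'' to ``nonzero everywhere'' requires combining the no-signaling factorizations with the connectivity of the pattern of distinct entries of $\textbf{t}$, and this is what makes the general $n$ case genuinely more than the $n=2$ case. (An alternative route ---induction on $n$ by merging two parties into one and coarse-graining their outcomes, which reduces the coarse-grained distribution to the $n=2$ result--- runs into essentially the same difficulty when one then tries to recover the fine-grained distribution.) The other ingredient needing care is the clean formulation and proof of the mixed-difference characterization of bi-partite linearity, together with the elementary but worth-stating translation of the support and no-signaling conditions into the Fourier identities used above.
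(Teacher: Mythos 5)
Your proposal is correct, and the substantive part --- the uniqueness direction --- is proved by a genuinely different route from the paper's. The compatibility and no-signaling checks and the non-uniqueness direction (the product distribution built from $f^A$ and $f^B$, which is exactly the paper's Eq.~\eqref{bigenbox}) coincide with what the paper does. For uniqueness, the paper first reduces to $n=2$ by coarse-graining over bipartitions and then proves the two-party case (Lemma~\ref{lem:if}) by chasing marginals: the unique-game structure plus no-signaling give $p(m_1=x|s_1)=p(m_2=f(\textbf{s})-x|s_2)$, repeated application shows $p(m_2=\alpha f(s_1,s_2)-x|s_2)$ is independent of $\alpha$, and primality of $d$ forces uniform marginals --- your $n=2$ ``warm-up'' is essentially this lemma. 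Where you diverge is in handling general $n$ directly via the outcome Fourier transform: the support condition becomes the shift relation $\hat q(\textbf{t}+\lambda\mathbf{1}|\textbf{s})=\omega^{\lambda f(\textbf{s})}\hat q(\textbf{t}|\textbf{s})$, no-signaling localizes the $\textbf{s}$-dependence of $\hat q(\textbf{t}|\cdot)$ to $\mathrm{supp}(\textbf{t})$, and the propagation-plus-mixed-difference argument kills every non-constant frequency unless $f$ separates across a level set of $\textbf{t}$. This buys something real: the paper's reduction, as written, only pins down the coarse-grained distributions of $\bigl(\sum_{j\in A}m_j,\sum_{j\in B}m_j\bigr)$ for each bipartition, i.e.\ the Fourier components at frequencies taking at most two distinct values; for $d\geq 3$ and $n\geq 3$ this leaves components such as $\textbf{t}=(0,1,2)$ unconstrained, so the final step ``uniform on every bipartition $\Rightarrow$ Eq.~\eqref{genbox}'' is a gap that your argument closes. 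The costs are the Fourier bookkeeping and the mixed-difference characterization of separability, both of which you correctly flag as routine; both proofs use primality of $d$ in the same essential way (the paper to make $\alpha f(s_1,s_2)$ sweep all of $\mathbb{Z}_d$, you to invert $t_i-t_j$). Finally, your reading of ``bi-partite linear'' as separability across \emph{some} bipartition is the one the paper's own proof uses, despite the ambiguous ``for any'' in Definition~\ref{defn:bi-local}.
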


\begin{proof}
One can verify by inspection that this distribution is non-signaling (since all marginals are maximally uncertain) and satisfies $p(k=f(\textbf{s})|\textbf{s})=1$. It remains to be shown that the distribution in \eqref{genbox} is unique if and only if $f(\textbf{s})$ is not bi-partite linear. To prove the if statement, note that for any bi-partite linear function $f(\textbf{s})$, the non-signaling distribution
\begin{equation}\label{bigenbox}
p(\textbf{m}|\textbf{s}) = 
\begin{cases} d^{2-|A|-|B|}=d^{2-n} & \text{if $\sum_{j\in A}m_{j}=f^{1}(\textbf{s}^{A})$}
\\
& \text{and $\sum_{j\in B}m_{j}=f^{2}(\textbf{s}^{B})$,}
\\
0 &\text{otherwise,}
\end{cases}
\end{equation}
is compatible with $p(k=f(\textbf{s})|\textbf{s})=1$. We prove the only if statement in Lemma \ref{lem:if}.
\end{proof}

Note that local operations, which, for the Bell experiment setting, corresponds to adding functions $g'_j(s_j)$ to the measurement outcomes at each site, do not change whether or not a function is bi-partite linear. Since every function can be written in terms of an $n$-partite linear part and a non-$n$-partite linear part, local operations can create the $n$-partite linear part and so without loss of generality this part can be set to be zero. Therefore it is sufficient to consider functions $f(\textbf{s})$ such that $f(\textbf{t})=0$ for all $n$-digit strings $\textbf{t}\in\mathbb{Z}_{c}^{n}$ that have at most one non-zero digit.

We can separate the parties using any bipartition $A$ and $B$ and treat the sets of parties, $A$ and $B$, as single parties. To do this, we take the sum modulo $d$ of all outcomes on each side of the partition and set the number of settings on each side of the partition to be the product of the number of settings of the parties on that side of the partition (e.g., $c_A = \prod_{i\in A} c_i$). For a function that is not bi-partite linear, for all bipartitions into $A$ and $B$, the function $f:\mathbb{Z}_{c_{A}}\times\mathbb{Z}_{c_{B}}\rightarrow\mathbb{Z}_{d}$ is non-$n$-partite linear for $n=2$. Therefore it is sufficient to show that Theorem \ref{thm:generalized-pr} applies for $n=2$ and so it must apply for all bipartitions of $n$ parties as long as a function $f(\textbf{s})$ in \eqref{genbox} is not bi-partite linear. If the probability distribution in \eqref{genbox} applies for $n=2$ then it must apply for all possible bipartitions of $n$ parties thus generating the distribution \eqref{genbox} for all $n$.

\begin{lem}\label{lem:if}
For every function $f(s_{1},s_{2}):\mathbb{Z}_{c_{1}}\times\mathbb{Z}_{c_{2}}\rightarrow\mathbb{Z}_{d}$ that is non-$n$ partite linear for $n=2$, the only non-signaling distribution compatible with the corresponding vertex $p(k=f(s_{1},s_{2})|\textbf{s})=1$ in $\mathcal{P}$ is
\begin{equation}
p(m_{1},m_{2}|s_{1},s_{2}) = 
\begin{cases} d^{-1} & \text{if $m_{1}+m_{2}=f(s_{1},s_{2})$,}
\\
0 &\text{otherwise.}
\end{cases}
\end{equation}
\end{lem}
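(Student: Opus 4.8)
The plan is to work in the two-party setting with inputs $s_1 \in \mathbb{Z}_{c_1}$, $s_2 \in \mathbb{Z}_{c_2}$ and outputs in $\mathbb{Z}_d$, and to show that the no-signaling constraints, together with the hard constraint $p(m_1+m_2 = f(s_1,s_2)\,|\,\textbf{s}) = 1$, force all marginals to be uniform. First I would reduce to the normalized case: as noted in the discussion preceding the lemma, adding single-site functions $g'_j(s_j)$ to the outputs is a reversible local relabelling that preserves both non-$n$-partite linearity and the uniqueness question, so without loss of generality $f(t_1,0)=f(0,t_2)=0$ for all $t_1,t_2$, and in particular $f(0,0)=0$. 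Let $q_j(m_j\,|\,s_j)$ denote the (well-defined, by no-signaling) marginal of party $j$. The constraint $p(k=f(\textbf{s})|\textbf{s})=1$ means the joint distribution is supported on the set $\{(m_1,m_2): m_1+m_2 = f(s_1,s_2)\}$; summing out $m_2$ shows $q_1(\cdot\,|\,s_1)$ is supported on the full set $\mathbb{Z}_d$ in general, and a short argument shows $p(m_1,m_2|s_1,s_2) = q_1(m_1|s_1)\,[\,m_2 = f(s_1,s_2)-m_1\,]$, i.e. the joint distribution is completely determined by the single marginal $q_1(\cdot\,|\,s_1)$ once $s_2$ is also fixed.

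The key step is then to exploit the no-signaling condition on party 2's marginal. Fixing $s_1$ and writing $q_2(m_2\,|\,s_2) = \sum_{m_1} q_1(m_1|s_1)\,[\,m_2 = f(s_1,s_2)-m_1\,] = q_1\big(f(s_1,s_2) - m_2 \,\big|\, s_1\big)$, the requirement that this be independent of $s_1$ says: for all $s_1, s_1'$ and all $s_2$, the function $m_1 \mapsto q_1(m_1|s_1)$ equals $m_1 \mapsto q_1\big(m_1 + f(s_1,s_2) - f(s_1',s_2)\,\big|\,s_1'\big)$ as distributions on $\mathbb{Z}_d$. In other words, the distribution $q_1(\cdot\,|\,s_1)$ is a cyclic shift of $q_1(\cdot\,|\,s_1')$ by the amount $f(s_1,s_2)-f(s_1',s_2)$, and crucially \emph{this shift amount cannot depend on $s_2$}. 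Hence for every pair $s_1,s_1'$ the quantity $f(s_1,s_2)-f(s_1',s_2)$ is constant in $s_2$; call it $h(s_1)-h(s_1')$ after fixing a reference value of $s_1'$ (using $f(s_1',0)=0$). This gives $f(s_1,s_2) = h(s_1) + (\text{something depending only on } s_2)$, and since $f(0,s_2)=0$ the second term is just $f(0,s_2)=0$, so $f(s_1,s_2)=h(s_1)$ depends only on $s_1$. By the symmetric argument (swapping the roles of the two parties) $f$ also depends only on $s_2$, so $f$ is constant, hence $f(s_1,s_2)=f(0,0)=0$. But a function that is identically $0$ (or, undoing the local relabelling, of the form $g_1(s_1)+g_2(s_2)$) is $2$-partite linear, contradicting the hypothesis.

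Therefore if $f$ is non-$n$-partite linear for $n=2$, no non-uniform marginal $q_1(\cdot\,|\,s_1)$ is consistent with no-signaling; $q_1(m_1|s_1)=d^{-1}$ for all $m_1,s_1$, and then the formula $p(m_1,m_2|s_1,s_2) = q_1(m_1|s_1)\,[\,m_2 = f(s_1,s_2)-m_1\,]$ gives exactly the claimed distribution. The main obstacle I anticipate is making the ``cyclic shift is independent of $s_2$'' step fully rigorous when the marginal $q_1(\cdot|s_1)$ has a nontrivial stabilizer under cyclic shifts of $\mathbb{Z}_d$ — i.e. when $q_1(\cdot|s_1)$ is invariant under shifts by some proper subgroup of $\mathbb{Z}_d$ (this can only happen when $d$ is composite, which the paper's prime-$d$ assumption rules out, but one should check that the degenerate uniform-on-a-coset case still collapses to $f$ being $2$-partite linear). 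Handling that carefully, and being explicit about the reduction to the normalized $f(t,0)=f(0,t)=0$ case, is where the real work lies; the rest is bookkeeping with modular arithmetic.
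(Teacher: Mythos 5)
Your proposal is correct and follows essentially the same route as the paper's proof: both reduce to the normalized case $f(s_1,0)=f(0,s_2)=0$, use the unique-game structure together with no-signaling to show that the single-party marginals are cyclic shifts of one another by amounts given by values of $f$, and invoke the primality of $d$ to force uniformity. The only real difference is one of framing --- the paper argues directly by iterating the shift relation to obtain invariance under $\alpha f(s_1,s_2)$ for all $\alpha\in\mathbb{Z}_d$, whereas you argue contrapositively via the trivial-stabilizer dichotomy; the stabilizer worry you flag is dispatched exactly by the prime-$d$ assumption, since for prime $d$ a non-uniform distribution on $\mathbb{Z}_d$ has no nontrivial shift symmetry.
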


\begin{proof}
The condition $p(k=f(s_{1},s_{2})|s_{1},s_{2})=1$ for all $\textbf{s}=\{s_{1},s_{2}\}$ implies that for every value of $m_{1}$ in $p(m_{1},m_{2}|s_{1},s_{2})$, there exists a unique value of $m_{2}=f(\textbf{s})-m_{1}$. In computer science terminology we would call this a ``unique game". This immediately implies the equality for the following conditional distributions:
\begin{eqnarray}
p(m_{1}=x,m_{2}=f(\textbf{s})-x|\textbf{s})&=&p(m_{1}=x|\textbf{s})\nonumber \\
&=&p(m_{2}=f(\textbf{s})-x|\textbf{s}), \nonumber \\
\end{eqnarray}
for all $x\in\mathbb{Z}_{d}$. The no-signaling condition further implies that $p(m_{1}=x|\textbf{s})=p(m_{1}=x|s_{1})$ and
\begin{equation}\label{marginal}
p(m_{1}=x|s_{1})=p(m_{2}=f(\textbf{s})-x|s_{2}),
\end{equation}
which must be satisfied for all $\textbf{s}$ and all $x$. We will show that repeated application of \eqref{marginal} for varying $\textbf{s}$ allows us to prove that all non-marginal probabilities are equal provided that $f(\textbf{s})$ has a non-$n$-partite linear element. As discussed above, we can set $f(0,s_{2})=f(s_{1},0)=0$ for all $\textbf{s}=\{s_{1},s_{2}\}$. For such functions, repeatedly applying \eqref{marginal} gives
\begin{align}
p(m_2 = - x|s_2) &= p(m_1 = x|0) \nonumber\\
&= p(m_2 = - x|0) \nonumber\\
&= p(m_1 = x|s_1) \nonumber\\
&= p(m_2 = f(s_1,s_2) - x|s_2)
\end{align}
for all $x$. Repeated iteration implies 
\begin{equation}
p(m_2 = - x|s_2) = p(m_2 = \alpha f(s_1,s_2) - x|s_2)
\end{equation}
for all $\alpha\in\mathbb{Z}_d$. The function $f(\textbf{s})$ is non-$n$-partite linear so it must have at least one value of $\{s_{1},s_{2}\}$ where $f(s_{1},s_{2})$ is non-zero. Since $d$ is prime, $\alpha f(s_1,s_2)$ takes on all values in $\mathbb{Z}_d$, therefore the marginals are $p(m_2|s_2)=d^{-1}$ for all $m_2,s_2$. Applying equation \eqref{marginal} implies that $p(m_{1},m_{2}|s_{1},s_{2})=d^{-1}$ for all $\textbf{m}$ such that $m_{1}+m_{2}=f(\textbf{s})$.
\end{proof}

As we now show, for vertices $p(k=f(\textbf{s})|\textbf{s})=1$ of $\mathcal{P}$ such that $f(\textbf{s})$ is bi-partite linear, the existence of multiple non-signaling distributions implies that the vertex of $\mathcal{P}$ corresponds to a facet of $\mathcal{NS}$.

\begin{prop}\label{prop:NS-facets}
Every non-signaling distribution corresponding to a vertex of $\mathcal{P}$ lies on a facet of $\mathcal{NS}$.
\end{prop}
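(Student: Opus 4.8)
The plan is to sidestep any vertex or facet enumeration and argue softly: I will show that any non-signaling distribution lying above a vertex of $\mathcal{P}$ must have a vanishing component, hence cannot lie in the relative interior of $\mathcal{NS}$, and then invoke the elementary fact that the relative boundary of a polytope is the union of its facets.

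First I would record that every vertex of $\mathcal{P}$ is a deterministic correlator $p(k=f(\textbf{s})|\textbf{s})=1$ for some $f:\mathbb{Z}_c^n\to\mathbb{Z}_d$, and that the marginalization map $\Pi$ sending $p(\textbf{m}|\textbf{s})$ to the correlator $p(\sum_j m_j|\textbf{s})$ maps $\mathcal{NS}$ onto $\mathcal{P}$; in particular, by Theorem \ref{thm:generalized-pr} each such vertex has at least one preimage in $\mathcal{NS}$, for instance the distribution \eqref{genbox}. Now let $P$ be any element of $\mathcal{NS}$ with $\Pi(P)$ equal to this vertex. Then $\sum_{\textbf{m}:\,\sum_j m_j = f(\textbf{s})} P(\textbf{m}|\textbf{s}) = 1$ for every $\textbf{s}$, while normalization gives $\sum_{\textbf{m}} P(\textbf{m}|\textbf{s}) = 1$; subtracting and using positivity $P(\textbf{m}|\textbf{s}) \ge 0$ forces $P(\textbf{m}|\textbf{s}) = 0$ whenever $\sum_j m_j \neq f(\textbf{s})$. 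Since $d \ge 2$, for each $\textbf{s}$ there are $d^{n-1}(d-1) \ge 1$ such strings $\textbf{m}$, so $P$ saturates a positivity inequality and hence is not in the relative interior of $\mathcal{NS}$.

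To finish, I would use that $\mathcal{NS}$ is full-dimensional in its affine hull (the uniform distribution $p(\textbf{m}|\textbf{s}) = d^{-n}$ belongs to its relative interior, since all positivity inequalities hold strictly there), so that the relative boundary of $\mathcal{NS}$ is exactly the union of its facets. As $P$ is not in the relative interior it lies in the relative boundary, hence on some facet of $\mathcal{NS}$; since the vertex of $\mathcal{P}$ and the preimage $P$ were arbitrary, this proves the proposition. The same argument can be phrased more structurally: for a vertex $v$ of $\mathcal{P}$, the set $\Pi^{-1}(v) \cap \mathcal{NS}$ is an exposed face of $\mathcal{NS}$ — it is where the pullback under $\Pi$ of a linear functional exposing $v$ in $\mathcal{P}$ attains its maximum over $\mathcal{NS}$ — it is nonempty by Theorem \ref{thm:generalized-pr}, and it is a \emph{proper} face because $\mathcal{P}$ is not a single point; and every proper face of a polytope is contained in a facet.

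I do not anticipate a genuine obstacle; the only point needing care is the step ``relative boundary $=$ union of facets,'' where one must work inside the affine hull of $\mathcal{NS}$ (so that ``facet'' means a codimension-one face of that hull) and note that the particular saturated positivity inequality need not itself be facet-defining — it is enough that $P$ leaves the relative interior, since a proper face is automatically contained in a facet. Everything else (checking the marginals of \eqref{genbox} and of the uniform distribution, and that $p(k=f(\textbf{s})|\textbf{s})=1$ propagates to the vanishing of the off-support probabilities) is routine bookkeeping.
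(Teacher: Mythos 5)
Your proof is correct, but it follows a different route from the paper's. The paper expands an arbitrary preimage $P$ of the vertex as a convex combination of vertices $E$ of $\mathcal{NS}$, uses the extremality of the vertex of $\mathcal{P}$ to kill the weight $p(E)$ of every $E$ whose correlator differs, and then argues that the convex hull $\mathcal{F}$ of the surviving vertices is a face of $\mathcal{NS}$ because no convex combination of the excluded vertices can meet it. You instead avoid any vertex decomposition of $\mathcal{NS}$: from $\sum_{\textbf{m}:\sum_j m_j=f(\textbf{s})}P(\textbf{m}|\textbf{s})=1$ and normalization you force $P(\textbf{m}|\textbf{s})=0$ off the support, so $P$ saturates a positivity inequality that is strict at the uniform distribution, hence lies on the relative boundary of $\mathcal{NS}$, hence on a facet. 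This is more elementary (it needs nothing about the vertices of $\mathcal{NS}$, whose enumeration the paper itself stresses is hard) and it is also slightly more careful than the paper on one point: the paper's proof really establishes that $\mathcal{F}$ is a \emph{face} of $\mathcal{NS}$, not that it has codimension one, and the statement ``lies on a facet'' is then justified exactly by the observation you make explicit, namely that a proper face of a polytope is contained in a facet. Your second, ``structural'' phrasing --- that $\Pi^{-1}(v)\cap\mathcal{NS}$ is a proper exposed face of $\mathcal{NS}$, cut out by the pullback of a functional exposing $v$ in $\mathcal{P}$ --- is essentially a cleaner packaging of the paper's convexity argument, so the two proofs meet there; what the paper's version buys in exchange is the explicit identification of the face as the convex hull of the NS-vertices compatible with the correlator, which is what feeds into Proposition \ref{cor:NS-vertices}.
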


\begin{proof}
Any non-signaling distribution $p(\textbf{m}|\textbf{s})$ can be written as a convex combination $p(\textbf{m}|\textbf{s})=\sum_{E}p(E)p_{E}(\textbf{m}|\textbf{s})$ of the vertices $E$ of $\mathcal{NS}$, where $\sum_E p(E) = 1$. Therefore the associated correlator is
\begin{align}\label{eqn:prob-correlator}
p(k|\textbf{s}) &=\sum_{E}p(E)\delta^{\sum_{j=1}^{n}m_{j}}_{k}p_{E}(\textbf{m}|\textbf{s})	\nonumber\\
&=\sum_{E}p(E)p_{E}(k|\textbf{s}),
\end{align}
where $p_{E}(k|\textbf{s})=\delta^{\sum_{j=1}^{n}m_{j}}_{k}p_{E}(\textbf{m}|\textbf{s})$ is the correlator resulting from each extreme point of $\mathcal{NS}$.

As the non-signaling distribution results in a correlator that is a vertex of $\mathcal{P}$, $p(E)=0$ for all vertices of $\mathcal{NS}$ that do not result in the same correlator. Denote by $\mathcal{E}$ the set of vertices of $\mathcal{NS}$ that correspond to the correlator and let $\mathcal{F}$ be the region of convex combinations of the elements of $\mathcal{E}$. Any element in $\mathcal{F}$ must correspond to the correlator by \eqref{eqn:prob-correlator}.

$\mathcal{F}$ is not a facet of $\mathcal{NS}$ if and only if there exists a convex combination of the vertices not in $\mathcal{E}$ that intersects $\mathcal{F}$. However, if there exists such a convex combination, then this would give a convex combination of correlators in $\mathcal{P}$ that is equal to a vertex of $\mathcal{P}$, which is a contradiction.
\end{proof}

The uniqueness of the probability distributions in Theorem \ref{thm:generalized-pr} implies that each vertex $p(k=f(\textbf{s})|\textbf{s})=1$ of $\mathcal{P}$ such that $f(\textbf{s})$ is not bi-partite linear corresponds to a unique vertex of $\mathcal{NS}$ (i.e., the ``facet'' in Proposition \ref{prop:NS-facets} collapses to a single point).

\begin{prop}\label{cor:NS-vertices}
Every probability distribution of the form \eqref{genbox} is a vertex of $\mathcal{NS}$ if and only if $f(\textbf{s})$ is not a bi-partite linear function.
\end{prop}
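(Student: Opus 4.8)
The plan is to prove the two implications separately, using Theorem~\ref{thm:generalized-pr} for the ``if'' direction and an explicit convex decomposition for the ``only if'' direction.

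For the implication ``$f$ not bi-partite linear $\Rightarrow$ \eqref{genbox} is a vertex of $\mathcal{NS}$'', I would argue as follows. By Theorem~\ref{thm:generalized-pr}, when $f$ is not bi-partite linear the distribution \eqref{genbox} is the \emph{only} element of $\mathcal{NS}$ whose associated correlator is the deterministic correlator $p(k=f(\textbf{s})|\textbf{s})=1$, and this correlator is an extreme point of $\mathcal{P}$. Now write \eqref{genbox} as a convex combination $\sum_E p(E)E$ of vertices $E$ of the polytope $\mathcal{NS}$. Passing to correlators exactly as in \eqref{eqn:prob-correlator}, the correlator of \eqref{genbox} becomes the corresponding convex combination of the correlators $p_E(k|\textbf{s})$, each of which lies in $\mathcal{P}$; since a vertex of $\mathcal{P}$ admits no non-trivial convex decomposition inside $\mathcal{P}$, every $E$ with $p(E)>0$ must have correlator $p(k=f(\textbf{s})|\textbf{s})=1$. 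Uniqueness then forces every such $E$ to equal \eqref{genbox}, so \eqref{genbox} is itself a vertex of $\mathcal{NS}$. (Equivalently, in the language of Proposition~\ref{prop:NS-facets}, the region $\mathcal{F}$ of $\mathcal{NS}$ associated with this vertex of $\mathcal{P}$ degenerates to the single point \eqref{genbox}.)

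For the converse I would prove the contrapositive: if $f$ is bi-partite linear, then \eqref{genbox} is a proper convex combination of distinct non-signaling distributions, hence not a vertex. Fix a bipartition $\{A,B\}$ witnessing that $f$ is bi-partite linear, i.e.\ with $f(\textbf{s})=f^{A}(\textbf{s}^{A})+f^{B}(\textbf{s}^{B})$, and for each $\gamma\in\mathbb{Z}_d$ define
\begin{equation}
q_{\gamma}(\textbf{m}|\textbf{s}) =
\begin{cases}
d^{2-n} & \text{if } \sum_{j\in A}m_{j}=f^{A}(\textbf{s}^{A})+\gamma \text{ and } \sum_{j\in B}m_{j}=f^{B}(\textbf{s}^{B})-\gamma,\\
0 & \text{otherwise.}
\end{cases}
\end{equation}
Each $q_{\gamma}$ factorises across the cut $\{A,B\}$ into two pieces that are each uniform over a coset of a zero-sum linear subspace, so every marginal of $q_\gamma$ is either uniform or depends only on the settings on one side of the cut; hence $q_{\gamma}\in\mathcal{NS}$. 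For fixed $\textbf{s}$ the supports of the $q_{\gamma}$ lie in distinct cosets of $\{\textbf{m}^{A}:\sum_{j\in A}m_{j}=0\}$, so they are pairwise disjoint and the $q_{\gamma}$ are pairwise distinct; and for each $\textbf{m},\textbf{s}$ exactly one $q_{\gamma}$ is non-zero on $(\textbf{m},\textbf{s})$ precisely when $\sum_{j}m_{j}=f(\textbf{s})$, so $\tfrac1d\sum_{\gamma\in\mathbb{Z}_d}q_{\gamma}$ reproduces \eqref{genbox}. Since $d\geq 2$, this is a non-trivial convex combination of distinct points of $\mathcal{NS}$, so \eqref{genbox} is not a vertex.

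The main obstacle, and the step needing the most care, is the verification that each $q_{\gamma}$ is genuinely non-signaling, in particular the boundary cases $|A|=1$ or $|B|=1$ where one factor becomes deterministic; the key point to spell out is that a uniform distribution over a coset of the zero-sum subspace has uniform marginals on every proper subset of the parties on that side of the cut, which gives no-signaling both within each side and across the cut. Everything else — that the deterministic correlator is an extreme point of $\mathcal{P}$, and the support bookkeeping showing $\tfrac1d\sum_\gamma q_\gamma$ equals \eqref{genbox} — is routine.
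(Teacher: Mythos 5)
Your proof is correct, and it is worth separating the two directions when comparing it to the paper. For the ``if'' direction (non-bi-partite-linear $\Rightarrow$ vertex) you follow essentially the paper's own route: decompose \eqref{genbox} over the vertices $E$ of $\mathcal{NS}$ as in Proposition~\ref{prop:NS-facets}, use extremality of the deterministic correlator in $\mathcal{P}$ to force every contributing $E$ to have the same correlator, and then invoke the uniqueness statement of Theorem~\ref{thm:generalized-pr} to conclude that the decomposition is trivial. Where you genuinely go beyond the paper is the converse. The paper's proof is a single sentence that only covers the forward implication; the ``only if'' half is left implicit, presumably via the alternative compatible distribution \eqref{bigenbox} exhibited in the proof of Theorem~\ref{thm:generalized-pr}. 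But mere non-uniqueness of the compatible distribution does not by itself show that \eqref{genbox} fails to be extreme --- one needs an actual convex decomposition of \eqref{genbox} itself, and your family $q_{\gamma}$, $\gamma\in\mathbb{Z}_d$, with $\tfrac1d\sum_{\gamma}q_{\gamma}$ reproducing \eqref{genbox} via the disjoint-coset bookkeeping, supplies exactly that. Your verification that each $q_{\gamma}$ is non-signaling (uniform marginals on proper subsets of each side of the cut, with the $|A|=1$ or $|B|=1$ case reducing to a deterministic local map) is sound. In short: same mechanism as the paper for one implication, and a more careful, fully explicit argument than the paper provides for the other.
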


\begin{proof}
As there is a unique non-signaling distribution corresponding to the correlator $p(f(\textbf{s})|\textbf{s})=1$, there is at most one $E$ such that $p(E)\neq0$ in \ref{eqn:prob-correlator}.
\end{proof}

The structure of the high dimensional space of all possible non-signaling distributions can be partly revealed by considering the space of correlators, which are of relatively low dimension. Note that the difficulty in ascertaining the remaining structure of $\mathcal{NS}$ comes from the difficulty of finding the set of vertices that generate vertices of $\mathcal{P}$ that correspond to bi-partite linear functions and determining whether this gives a complete list of vertices. Furthermore, much of the structure of LHV theories in $\mathcal{NS}$ is reduced to a small number of vertices of $\mathcal{P}$. In this way, the space of correlators efficiently encapsulates vital information about all non-signaling probability distributions. 

There is a connection between the notion of ``true $n$-party non-locality'' \cite{svet, barrett2} and the vertices of $\mathcal{NS}$ defined by Theorem \ref{thm:generalized-pr}. The bi-partite linear functions corresponding to vertices in $\mathcal{P}$ only reveal ``non-locality'' (i.e. correlations not resulting from LHV theories) between a subset of all parties, but no non-locality across at least one bipartition. However, functions that are not bi-partite linear result in correlations that are non-local across all partitions. 

\subsection{Summary}\label{sec2d}

\noindent
In this section we have constructed a framework for Bell correlators and showed that these correlators capture important properties of the space of non-signaling probability distributions. These correlators also capture the properties of LHV theories in a natural information theoretic description, i.e. that of computational expressiveness. This work generalizes previous work of \cite{hoban,hoban2} and shows where the structure in these previously studied settings emerges from. 

Given this framework, we want to look at the correlators that are achievable with quantum theory. It is well-known that there exists quantum mechanical correlators that cannot be described by correlators resulting from an LHV theory, and this is demonstrated by the violation of a Bell inequality \cite{bell}. Therefore we now devote our attention to the Bell inequalities.

\section{Bell inequalities}\label{sec3}

\noindent
Bell inequalities allow us to characterize the polytope of LHV correlators. The normalization and positivity inequalities defined in Section \ref{sec2b} define part of $\mathcal{L}$, the LHV polytope. However, these inequalities are also satisfied by all possible correlators $p(k|\textbf{s})$ and define the polytope $\mathcal{P}$ of all correlators. These inequalities then trivially bound $\mathcal{L}$ as we cannot show a separation between $\mathcal{P}$ and $\mathcal{L}$. We want to demonstrate such a separation and non-trivially bound $\mathcal{P}$, which we can do by constructing Bell inequalities of the form
\begin{equation}\label{bell}
\sum_{\textbf{s},k}\omega_{k,\textbf{s}}p(k|\textbf{s})\leq \gamma_{\mathcal{L}},
\end{equation}
for all $p(k|\textbf{s})\in\mathcal{L}$, where $\omega_{k,\textbf{s}}$ and $\gamma_{\mathcal{L}}$ are real numbers. We call $\gamma_{\mathcal{P}}$ the maximum value of the sum on the left-hand-side over correlators in $\mathcal{P}$. For the Bell inequality to be non-trivial, we require $\gamma_{\mathcal{L}}<\gamma_{\mathcal{P}}$.

If a correlator exists outside the LHV polytope, then it violates at least one facet Bell inequality. Finding facet Bell inequalities involves an optimization of the parameters $\omega_{k,s}$ in \eqref{bell} and in general is an NP-hard problem \cite{pitowski}. The problem is a facet enumeration problem and there are algorithmic packages that can perform this optimization such as Polymake \cite{polymake}. These optimization algorithms are often used to characterize the non-signaling \cite{barrett2} and LHV polytopes \cite{collins} in full probability distribution settings. We used Polymake to find the facet Bell inequalities for a number of small $(n,c,d)$ settings. The number of facet Bell inequalities for each setting is listed in Tab. \ref{tab:tab1}, which shows that the number of inequalities grows rapidly in the size of the problem.

\begin{table}
\begin{center}
 \begin{tabular}{| c | c | c | c | c |}
 \hline
 n & c & d & \# Vertices &\# Facet Bell inequalities \\ \hline
 2 & 2 & 2 & 8 & 16 \\ 
 2 & 2 & 3 & 27 & 66 \\
 2 & 2 & 5 & 125 & 1020 \\
 3 & 2 & 2 & 16 & 256 \\
 3 & 2 & 3 & 81 & 125,412 \\
 2 & 3 & 2 & 32 & 90 \\
 \hline
 \end{tabular}
\end{center}
\caption{A table of number of facet Bell inequalities for each setting $(n,c,d)$ and the number of vertices for the LHV polytope.}
\label{tab:tab1}
\end{table}
 
For the case of $(n,2,2)$, the LHV polytope is a hyperoctahedron \cite{ww,zb}. In general though, there is no obvious geometrical structure or connection between the numbers of vertices and the numbers of facet inequalities. It is quite apparent that finding the facet Bell inequalities is no easy task. 

In this section we will use the computational insight gained in Section \ref{sec2} to construct Bell inequalities. We first find a simple way to generate non-trivial Bell inequalities that necessarily bound the LHV polytope. As we shall discuss further in Section \ref{sec4}, these non-trivial inequalities also have a nice cross-over with the structures in so-called \emph{non-local games}. We then describe some of the facet Bell inequalities and show that they too have a computational nature. 

Finally, in the last part of this section, we discuss the quantum violation of facet Bell inequalities. We show that there is a violation for all facet Bell inequalities for $n=2$ in Tab. \ref{tab:tab1}. We comment on the quantum states that \emph{maximally} violate these Bell inequalities and find that in few instances is it a maximally entangled state. This supports the view that entanglement and a violation of a Bell inequality are not synonymous concepts.

\subsection{Non-trivial Bell inequalities}\label{sec3a}

\noindent
One might ask what is the point of looking for Bell inequalities that are not facet-defining? Firstly, bounding the set of LHV correlators still creates a non-trivial subregion of a correlator space. Secondly, we can still demonstrate non-classical behavior without the computational difficulty that the facet-defining condition creates. Finally, despite not being facet-defining, non-trivial Bell inequalities can have a role in particular applications such as in non-local games or Measurement-based Quantum Computing (MBQC) \cite{hoban2}, as non-trivial Bell inequalities might capture computational power. We will elaborate on this final point in Section \ref{sec4}. 

In this subsection, we will discuss non-trivial Bell inequalities as being inequalities in the space $\mathcal{P}'$ of all possible correlators $p(k|\textbf{s})$, including those for $k=0$. In this space, the normalisation conditions become $\sum_{k}p(k|\textbf{s})=1$ for all $\textbf{s}$, which defines a bounded hyperplane in $\mathbb{R}^{dc^{n}}$. The space of correlators $\mathcal{P}'$ is a convex polytope defined by the positivity conditions within this hyperplane. We regard the region of LHV correlators, $\mathcal{L}$, as a sub-region in both $\mathcal{P}$ and $\mathcal{P}'$ as one can always go between the two spaces by disregarding $p(0|\textbf{s})$ and applying the normalisation inequalities. In the rest of the paper, including the discussion on facet Bell inequalities, we always discuss correlators in the space $\mathcal{P}$.

Given the motivation for finding non-trivial Bell inequalities we now present a simple way of generating non-trivial Bell inequalities. We begin by considering the CHSH inequality \cite{chsh} as an example. The CHSH inequality can be written as
\begin{equation}\label{chsh}
\sum_{s_{1},s_{2}}\sum_{k=0}^{1}\delta^{k}_{s_{1}s_{2}}p(k|s_{1},s_{2})\leq 3,
\end{equation}
which is \eqref{bell} with $\gamma_{\mathcal{L}}=3$ and $\omega_{k,s}=\delta^{k}_{s_{1}s_{2}}$. The CHSH inequality is also a facet Bell inequality for this setting.

By convexity, it is only necessary to consider the vertices of the LHV polytope to obtain the bound $\gamma_{\mathcal{L}}=3$. In this case, the vertices correspond to the linear boolean functions of $s_1$ and $s_2$ \cite{hoban}. For these vertices the correlators are then $p(k|s_{1},s_{2})=\delta^{k}_{g(s_{1},s_{2})}$ where $g(s_{1},s_{2})$ are the linear Boolean functions on $\{s_{1},s_{2}\}$. The sum in \eqref{chsh} is then $\sum_{s_{1},s_{2}}\delta^{g(s_{1},s_{2})}_{s_{1}s_{2}}\leq 3$ as the functions $g(s_{1},s_{2})$ overlaps with the function $s_{1}s_{2}$ for at most $3$ values of $s$. An example of a linear function achieving this overlap is $g(s_{1},s_{2})=0$. On the other hand, if a correlator $p(k=f(\textbf{s})|s_{1},s_{2})$ achieves the map $f(\textbf{s})=s_{1}s_{2}$ deterministically, then it achieves a value of $\gamma_{\mathcal{P}}=4$ for the sum in \eqref{chsh}. This inequality highlights the inability for LHV correlators to evaluate non-linear Boolean functions deterministically. 

For all settings $(n,c,d)$, we can generalize the above concepts from the Boolean algebra for $(2,2,2)$ to our more general framework of maps $f:\mathbb{Z}_{c}^{n}\rightarrow\mathbb{Z}_{d}$. We construct non-trivial Bell inequalities in the following way:
\begin{equation}\label{genchsh}
\sum_{\textbf{s}}\sum_{k=0}^{d-1}\delta^{k}_{f(\textbf{s})}p(k|\textbf{s})\leq \underset{g(\textbf{s})}{\textrm{sup}}\sum_{\textbf{s}}\delta^{f(\textbf{s})}_{g(\textbf{s})},
\end{equation}
where the supremum is over the set of $n$-partite linear functions and $f(\textbf{s})$ is a non-$n$-partite function of $\textbf{s}$. Therefore there are $N=d^{c^{n}}-d^{n(c-1)+1}$ non-trivial Bell inequalities of the form \eqref{genchsh}. 

We can prove the inequality in \eqref{genchsh} using the same arguments as with the CHSH inequality. By convexity we only need to consider the vertices of $\mathcal{L}$ which correspond to the $n$-partite linear functions $g(\textbf{s})$. The sum in \eqref{genchsh} is then $\sum_{\textbf{s}}\sum_{k=0}^{d-1}\delta^{k}_{f(\textbf{s})}\delta^{k}_{g(\textbf{s})}$. We then take the supremum over all $n$-partite functions $g(\textbf{s})$ to obtain the right-hand-side of \eqref{genchsh}.

Whilst these inequalities might seem contrived, they include some interesting Bell inequalities. For example, the Svetlichny inequality \cite{svet} for three parties takes the form of \eqref{genchsh}, i.e.,
\begin{equation}
\sum_{s_{1},s_{2},s_{3}}\delta^{k}_{s_{1}(s_{2}+ s_{3})+ s_{2}s_{3}}p(k|s_{1},s_{2},s_{3})\leq 6,
\end{equation}
which is \eqref{genchsh} for $f(\textbf{s})=s_{1}(s_{2}+s_{3})+s_{2}s_{3}$. The Svetlichny inequality is not a facet Bell inequality for the LHV polytope $\mathcal{L}$ in the $(3,2,2)$ setting \cite{ww} but it still captures very interesting phenomena. {In particular, the Svetlichny inequality captures correlations that are consistent with LHV theories if one averages over the measurement settings and outcomes for any one of the parties.}

The non-trivial Bell inequalities \eqref{genchsh} can be seen to truly capture the inability of LHV theories to evaluate non-$n$-partite linear functions via correlators. A way of emphasizing this is to consider these inequalities as a ``non-local game'' \cite{non-localgame}. That is, given an input $\textbf{s}$ to $n$ parties that do not communicate, what is the maximum probability of the parties producing outcomes $m$ such that $\sum_{j=1}^{n}m_{j}=f(\textbf{s})$ for some function $f(\textbf{s})$ using different strategies (or resources such as shared randomness or entanglement). A referee chooses the input and then obtains all outputs $m$ before computing $\sum_{j=1}^{n}m_{j}$.

If the referee's choice of input is uniformly random, then the inequality in \eqref{genchsh} captures the mean success probability of an LHV strategy performing the function $f(\textbf{s})$; one just needs to divide both sides by $c^{n}$ to obtain a proper probability. If one imagines that the choice of input $\textbf{s}$ is not chosen at random but instead with some probability $p(\textbf{s})\geq 0$ such that $\sum_{\textbf{s}}p(\textbf{s})=1$ then we can modify \eqref{genchsh} to be:
\begin{equation}\label{nontrivial}
\sum_{\textbf{s}}p(\textbf{s})\sum_{k=0}^{d-1}\delta^{k}_{f(\textbf{s})}p(k|\textbf{s})\leq \underset{g(\textbf{s})}{\textrm{sup}}\sum_{\textbf{s}}p(\textbf{s})\delta^{f(\textbf{s})}_{g(\textbf{s})},
\end{equation}
to get the mean success probability of performing the function $f(\textbf{s})$ with LHV correlators given a distribution $p(\textbf{s})$ on inputs. We now present the following proposition that shows an infinite class of non-trivial Bell inequalities can be easily generated.
\newline

\begin{prop}
Any Bell inequality of the form \eqref{nontrivial} is non-trivial whenever $p(\textbf{s})\neq0$ for all $\textbf{s}$.
\end{prop}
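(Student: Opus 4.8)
The plan is to show that for any non-$n$-partite linear function $f(\textbf{s})$ and any fully supported input distribution $p(\textbf{s})$, the bound $\gamma_{\mathcal{L}}$ on the right-hand side of \eqref{nontrivial} is strictly less than $\gamma_{\mathcal{P}}$, which here equals $\sum_{\textbf{s}}p(\textbf{s})=1$ (achieved by the deterministic correlator $p(k=f(\textbf{s})|\textbf{s})=1$ in $\mathcal{P}$, which exists by Theorem \ref{thm:generalized-pr}). So it suffices to prove that no $n$-partite linear function $g(\textbf{s})$ agrees with $f(\textbf{s})$ on \emph{all} of $\mathbb{Z}_c^n$, i.e. that $\sum_{\textbf{s}}p(\textbf{s})\,\delta^{f(\textbf{s})}_{g(\textbf{s})} < 1$ for every $n$-partite linear $g$. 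Since $p(\textbf{s})>0$ for every $\textbf{s}$, the sum equals $1$ only if $g(\textbf{s})=f(\textbf{s})$ for all $\textbf{s}$; but then $f$ would itself be $n$-partite linear, contradicting the hypothesis on $f$. Hence the supremum over $n$-partite linear $g$ is attained at some $g$ disagreeing with $f$ on at least one $\textbf{s}_0$, so it is at most $1 - p(\textbf{s}_0) < 1 = \gamma_{\mathcal{P}}$, giving $\gamma_{\mathcal{L}} < \gamma_{\mathcal{P}}$.

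The key steps, in order, are: (i) identify $\gamma_{\mathcal{P}}$ for the inequality \eqref{nontrivial} as $1$, realised by the deterministic correlator associated with $f$ itself (this uses only that $p(k=f(\textbf{s})|\textbf{s})=1$ is a valid point of $\mathcal{P}$); (ii) invoke Theorem 1 (or convexity, as in the proof of \eqref{genchsh}) to restrict the left-hand optimisation over $\mathcal{L}$ to deterministic $n$-partite linear correlators $p(k|\textbf{s})=\delta^{k}_{g(\textbf{s})}$, so that $\gamma_{\mathcal{L}} = \sup_{g}\sum_{\textbf{s}}p(\textbf{s})\,\delta^{f(\textbf{s})}_{g(\textbf{s})}$; (iii) observe that because every weight $p(\textbf{s})$ is strictly positive, $\sum_{\textbf{s}}p(\textbf{s})\,\delta^{f(\textbf{s})}_{g(\textbf{s})}=1$ forces $g\equiv f$ pointwise; (iv) conclude that no $n$-partite linear $g$ achieves the value $1$, since $f$ is by assumption not $n$-partite linear, so each admissible $g$ misses at least one $\textbf{s}_0$ and contributes at most $1-p(\textbf{s}_0)$; (v) since there are finitely many $g$, the supremum is a maximum and is therefore $<1$, establishing $\gamma_{\mathcal{L}}<\gamma_{\mathcal{P}}$.

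There is essentially no hard calculational obstacle here; the proposition is a short argument once the right reduction is in place. The one point that needs care — and which I would call the main (if minor) subtlety — is step (iv): one must be sure that ``$g$ disagrees with $f$ somewhere'' really holds for \emph{every} $n$-partite linear $g$, i.e. that the class of $n$-partite linear functions is closed in the sense that if some $g$ agreed with $f$ everywhere then $f$ would be $n$-partite linear. This is immediate from the hypothesis, but it is worth stating explicitly that it is the non-$n$-partite-linearity of $f$ (not merely $f\neq g$ for one particular $g$) that is being used, and that the full support of $p$ is exactly what upgrades ``$g\neq f$ somewhere'' to a strict deficit in the weighted sum. (By contrast, if $p(\textbf{s}_0)=0$ for some $\textbf{s}_0$, a $g$ agreeing with $f$ off $\textbf{s}_0$ could still attain the value $1$, which is why the hypothesis $p(\textbf{s})\neq 0$ for all $\textbf{s}$ is essential.)
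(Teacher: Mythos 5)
Your proposal is correct and follows essentially the same route as the paper's proof: both argue that any $n$-partite linear $g$ must disagree with the non-$n$-partite-linear $f$ at some $\textbf{s}_0$, so the fully supported weights force $\sum_{\textbf{s}}p(\textbf{s})\delta^{f(\textbf{s})}_{g(\textbf{s})}<1$, while the vertex $p(k=f(\textbf{s})|\textbf{s})=1$ of $\mathcal{P}$ attains exactly $1$. Your explicit remarks on the finiteness of the set of $g$ (so the supremum is attained) and on why full support of $p(\textbf{s})$ is essential are sound refinements of the paper's more terse argument.
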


\begin{proof}
For any $n$-partite linear function, 
\begin{align}
\sum_{\textbf{s}}p(\textbf{s})\delta^{f(\textbf{s})}_{g(\textbf{s})} &< \sum_{\textbf{s}}p(\textbf{s})  = 1
\end{align}
as there must be a value of $\textbf{s}$ such that $\delta^{f(\textbf{s})}_{g(\textbf{s})}=0$. Therefore the right-hand-side of \eqref{nontrivial} is strictly less than $1$ for all correlators in $\mathcal{L}$. However, for the correlator $p(k=f(\textbf{s})|\textbf{s})=1$ in $\mathcal{P}$, the left-hand side of \eqref{nontrivial} is exactly 1.
\end{proof}

The beauty of these non-trivial Bell inequalities is that they are easily generated as one can easily describe the $n$-partite linear functions for any setting $(n,c,d)$, yet they still bound the region of LHV correlators. They also explicitly capture the computational aspect of a Bell inequality experiment: the value of the left hand side of \eqref{nontrivial} gives the mean success probability of performing a particular function. 

We now consider the facet Bell inequalities that completely characterize the LHV polytope for small values of $(n,c,d)$. Whilst these latter inequalities do not have the immediacy of the non-trivial Bell inequalities and are difficult to calculate, they are still the optimal tool for demonstrating non-classical correlations. 

\subsection{Facet Bell inequalities}\label{sec3b}

\noindent
The facet Bell inequalities satisfy the facet-defining condition for LHV correlators and so if a correlator is outside the region of LHV correlators it must necessarily violate one of the facet Bell inequalities. In Tab. \ref{tab:tab1} we listed the number of facet Bell inequalities for a few settings that could be computed using the software Polymake. Included in the number of facet Bell inequalities are the $c^{n}$ normalization and $(d-1)c^{n}$ positivity inequalities that define the polytope $\mathcal{P}$. Despite these $dc^{n}$ inequalities, there are still a significant number of inequalities remaining. Also note we are again discussing correlators in the space $\mathcal{P}$ of all correlators except $p(0|\textbf{s})$ and $\mathcal{L}$ is the sub-region of LHV correlators in this space.

We can reduce the number of facet Bell inequalities by considering relabelings of the measurement choices, outcomes and parties, which we refer to as symmetries. If a Bell inequality can be changed into another inequality by such an operation, we refer to them as equivalent, or as members of the same symmetry class. Explicitly, the symmetries are
\begin{enumerate}
\item permutations of parties - $\{s_{i},s_{j},...,s_{n}\}\rightarrow\{s_{i'},s_{j'},...,s_{n'}\}$ where $k'=\sigma(k)$ is an element of the permutation group of order $n$
\item relabeling of measurement settings - $s_{j}\rightarrow s_{j}+\alpha_j$ for some $\alpha_j\in\mathbb{Z}_{c}$
\item relabeling of measurement outcomes - $m_{j}\rightarrow m_{j}+\beta(s_{j},j)$ where $\beta(s_{j},j)\in\mathbb{Z}_{d}$.
\end{enumerate}
The $n$-partite linear functions are closed under all of these operations. Using the facet-defining condition, the vertices of $\mathcal{L}$ that saturate a facet Bell inequality must be equivalent to another set of vertices in $\mathcal{L}$ by these operations that saturate another facet Bell inequality. Hence we can group together the facet Bell inequalities into these symmetry classes.

There are $n!$ permutations of $n$ parties and $c^{n}$ ways of relabeling measurement settings. Since for each input $s_{j}$ we add a value $\beta(s_{j})$, for each input $\textbf{s}$, $\beta(\textbf{s})=\sum_{j=1}^{n}\beta(s_{j})$ is added to $\sum_{j=1}^{n}m_{j}$. There will be at most $d^{cn}$ values of $\beta(\textbf{s})$.

We constructed a search algorithm to find all the symmetry classes of facet Bell inequalities for each setting $(n,c,d)$. In Tab. \ref{tab:tab2} we list the number of symmetry classes for each of the settings in Tab. \ref{tab:tab1}, which is dramatically fewer than the total number of facet Bell inequalities. One of the symmetry classes for each setting is the class of normalization and positivity conditions, leaving all the facet Bell inequalities that can be violated by LHV correlators outside of $\mathcal{L}$. We have therefore omitted the class of these positivity and normalization inequalities from Tab. \ref{tab:tab2} in order to leave the inequalities that can be violated.

We now look at a few particular families of facet Bell inequalities for the bipartite and tripartite setting based on the CGLMP inequality. We briefly discuss the $(2,3,2)$ setting and show how the CHSH inequality is essentially the only relevant facet inequality for this setting.

\begin{table}
\begin{center}
 \begin{tabular}{| c | c | c | c |}
 \hline
 n & c & d &\# Symmetry classes \\ \hline
 2 & 2 & 2  & 1 \\ 
 2 & 2 & 3  & 1 \\
 2 & 2 & 5  & 4 \\
 3 & 2 & 2  & 4 \\
 3 & 2 & 3  & 62 \\
 2 & 3 & 2  & 1 \\
 \hline
 \end{tabular}
\end{center}
\caption{The number of symmetry classes for each setting $(n,c,d)$. We have excluded the symmetry class of all positivity and normalisation inequalities from this number.}
\label{tab:tab2}
\end{table}

\textit{Bipartite Two-Setting Facet Bell Inequalities} - A symmetry class of note for $n=c=2$ is the class of facet Bell inequalities that are equivalent to the CGLMP inequality \cite{cglmp}. We can rewrite this inequality after symmetry transformations and mapping it into our normalized probability space $\mathcal{P}$ to obtain,
\begin{eqnarray}
\mathcal{C}_{CGLMP}=d\times p(\sum_{j=1}^{2}m_{j}=1|0,0)- & \nonumber \\
\sum_{\textbf{s}}(-1)^{s_{1}+s_{2}}p(\sum_{j=1}^{2}m_{j}=1|\textbf{s})+& \nonumber \\
\sum_{\textbf{s}}(-1)^{s_{1}+s_{2}}\sum_{k=2}^{d-1}(d-k-1)p(\sum_{j=1}^{2}m_{j}=k|\textbf{s})&\leq d. \nonumber \\
\end{eqnarray}
This inequality is also equivalent to the CHSH inequality for $d=2$ which we now write in the space $\mathcal{P}$ for clarity:
\begin{equation}
\mathcal{C}_{d=2}=\sum_{\textbf{s}}(-1)^{s_{1}s_{2}}p(\sum_{j=1}^{2}m_{j}=1|\textbf{s})\leq 2.
\end{equation}
For all possible correlators in $\mathcal{P}$, the maximal value of the left-hand-side of the CGLMP inequality is $2d-1$, thus violating it. In fact, for all $d$, this maximal violation of the CGLMP is obtained by a vertex of $\mathcal{P}$ corresponding to the map $f(\textbf{s})=s_{1}s_{2}+1$, i.e. the correlator $p(k|\textbf{s})=\delta^{k}_{s_{1}s_{2}+1}$. Masanes has shown that the only non-trivial, facet Bell inequalities are those that are equivalent to the CGLMP for the $(2,2,3)$ setting \cite{masanes}; this is confirmed by Tab. \ref{tab:tab2}.

For the $(2,2,2)$ setting there are $2^{4}-2^{3}=8$ non-$n$-partite linear functions and also $8$ inequalities in the symmetry class of the CHSH inequality. This is no coincidence as every Bell inequality in the symmetry class in maximally violated by a vertex of $\mathcal{P}$ corresponding to a non-$n$-partite linear function. This also occurs for the $(2,2,3)$ setting, there are $3^{4}-3^{3}=54$ non-$n$-partite linear functions and $54$ inequalities in the symmetry class of the CGLMP inequality. It can also be checked that every inequality in this symmetry class is violated by a different non-$n$-partite linear function.

The correspondence between non-$n$-partite linear functions and facet Bell inequality echoes the non-trivial Bell inequalities described by the inequalities of the form \eqref{genchsh}. There is a computational aspect to each inequality as a violation indicates that a correlator is in some sense closer to the non-$n$-partite linear functions. However, the correspondence breaks down for the $(2,2,5)$ setting where there are $5^{4}-5^{3}=500$ non-$n$-partite linear functions but $1000$ non-trivial, facet Bell inequalities in total. In this instance, each vertex of $\mathcal{P}$ maximally violates two facet Bell inequalities (each belonging to different symmetry classes). Therefore there are $1000$ facet Bell inequalities (see Appendix \ref{app1} for more details).

\textit{Tripartite Facet Bell Inequalities} - We have given an indication that facet Bell inequalities for $n=c=2$ have a computational interpretation. Every facet Bell inequality we have found is maximally violated uniquely by a vertex of $\mathcal{P}$. In this sense the violation of a facet Bell inequalities can quantify how computationally powerful a theory is. For situations with $n>2$, this becomes more complicated even for $n=3$ and $c=d=2$. The Mermin inequality \cite{mermin} for example is in the symmetry class of the inequality
\begin{equation}\label{mermin}
\sum_{\textbf{s}}\delta^{s_{1}}_{s_{2}}(-1)^{s_{1}s_{3}}p(\sum_{j=1}^{3}m_{j}=1|\textbf{s})\leq 2.
\end{equation}
Two vertices of $\mathcal{P}$ that are not vertices of $\mathcal{L}$ maximally violate this inequality. The two maps that correspond to these vertices are $f(\textbf{s})=s_{1}s_{2}s_{3}+1$ and $f(\textbf{s})=s_{1}s_{3}+1$. Therefore there is no longer a one-to-one correspondence between facet inequalities and the vertices of $\mathcal{P}$ that violate the inequality. 

The CHSH inequality expressed in terms of the expectation value of measurements (see next subsection) for two parties can be used to build facet Bell inequalities for more parties \cite{ww} such as the above inequality \eqref{mermin}. Analogously, we define CGLMP for three parties using the two party inequality. We have three parties but now we only consider non-zero terms in a Bell inequality when the third party's measurement setting $s_{3}=0$. For LHV correlators $p(k|s_{1},s_{2},0)$ the $n$-partite linear functions that can be achieved are $f(\textbf{s})=[\alpha_{1}s_{1}+\alpha_{2}s_{2}+\alpha_{3}]_{3}$ with $\alpha_{1},\alpha_{2},\alpha_{3}\in\mathbb{Z}_{d}$: the $n$-partite linear functions on two variables $s_{1}$ and $s_{2}$. Since the CGLMP inequality is facet-defining for the region of LHV correlators for two parties, or variables $s_{1}$ and $s_{2}$, it is facet-defining for this space of the $n=3$ correlators for $s_{3}=0$. Then we can write the tripartite CGLMP inequality as
\begin{eqnarray}
\mathcal{C}^{'}_{CGLMP}=d\times p(\sum_{j=1}^{3}m_{j}=1|0,0,0)- & \nonumber \\
\sum_{\textbf{s}}(-1)^{s_{1}+s_{2}}p(\sum_{j=1}^{3}m_{j}=1|s_{1},s_{2},0)+& \nonumber \\
\sum_{\textbf{s}}(-1)^{s_{1}+s_{2}}\sum_{k=2}^{d-1}(d-k-1)p(\sum_{j=1}^{3}m_{j}=k|s_{1},s_{2},0)&\leq d. \nonumber \\
\end{eqnarray}
For the case of $(3,2,3)$, this tripartite CGLMP inequality is facet-defining and forms a symmetry class with $324$ inequalities. This is also true for $d=2$ where $\mathcal{C}'_{CGLMP}$ results from the CHSH inequality and forms one of $4$ non-trivial symmetry classes. Also in this case there is another way of substituting the CHSH inequality to obtain the class containing the inequality \eqref{mermin}. This class also generalizes to the $d=3$ setting so that the following inequality is against the space where $s_{1}=s_{2}$ in a similar fashion to $\mathcal{C}^{'}_{CGLMP}$:
\begin{eqnarray}
\mathcal{C}^{''}_{CGLMP}=d\times p(\sum_{j=1}^{3}m_{j}=1|0,0,0)- & \nonumber \\
\sum_{\textbf{s}}\delta^{s_{1}}_{s_{2}}(-1)^{s_{1}+s_{3}}p(\sum_{j=1}^{3}m_{j}=1|\textbf{s})+& \nonumber \\
\sum_{\textbf{s}}\delta^{s_{1}}_{s_{2}}(-1)^{s_{1}+s_{3}}\sum_{k=2}^{d-1}(d-k-1)p(\sum_{j=1}^{3}m_{j}=k|\textbf{s})&\leq d. \nonumber \\
\end{eqnarray}
This inequality also forms a symmetry class with $324$ inequalities. There are $60$ other symmetry classes for $(3,2,3)$ (excluding the class of positivity and normalization inequalities). Inequalities from each of these classes can be found in the supplementary material \cite{website}.

\textit{Bipartite Three-Setting, Two-Outcome Facet Inequalities} - Finally, as can be seen from Tab. \ref{tab:tab2} for the $(2,3,2)$ setting there is only one symmetry class of non-trivial facet Bell inequalities. The Bell inequality generating this symmetry class is a generalization of the CHSH inequality:
\begin{equation}
\mathcal{C}_{c=3}=\sum_{\textbf{s}}(-1)^{s_{1}s_{2}}\prod_{j=1}^{2}(\delta^{s_{j}}_{0}+\delta^{s_{j}}_{1})p(\sum_{j=1}^{2}m_{j}=1|\textbf{s})\leq 2.
\end{equation}
This is exactly the same as the CHSH inequality if instead of $c=3$ we had $c=2$. In fact for either $c$ or $d$ equal to $4$ and $n=2$, inequalities constructed from the CHSH inequality capture a lot of the structure of the LHV polytope. 

In this subsection we have given an indication of the richness of the structure of the LHV polytope for some simple settings. There is also a computational element to some of these Bell inequalities, such as the CGLMP inequality which is maximally violated by a vertex of $\mathcal{P}$, in that greater violations enable correlators to come closer to evaluating a non-$n$-partite linear function deterministically. Quantum correlators are known to provide violations of Bell inequalities and so in the next section we discuss quantum correlators. 

\subsection{Quantum correlators}\label{sec3c}

\noindent
Quantum correlations can violate all manner of Bell inequalities if the correlations are generated from measurements on an entangled state. However, the connection between a violation of a facet Bell inequality and entanglement is not completely clear. Recently the two concepts have become very distinct and nowhere is this best demonstrated by the CGLMP inequality where the greatest violation of the CGLMP inequality is \emph{not} achieved with a bipartite maximally entangled state. The region $\mathcal{Q}\subset\mathcal{P}$ of quantum correlators is convex, but it is not a polytope since there are an infinite number extreme points \cite{pitowski}. It is not clear in general how to define all of the extreme points of $\mathcal{Q}$. We now discuss this region in the context of our CGLMP-like Bell tests.

Without loss of generality, we can assume that the measurements at each site are projective measurements by Naimark's theorem \cite{naimark}, so that for a quantum state $\rho\in(\mathcal{H})^{\otimes n}$, the correlators are
\begin{align}
p_{\mathcal{Q}}(k|\textbf{s}) = \sum_{m}\delta^{\sum_{j=1}^{n}m_{j}}_{k}\tr(\rho\bigotimes_{j=1}^{n}|m_{j}\rangle_{s_{j}}\langle m_{j}|_{s_{j}}).
\end{align}
As $\rho$ can be expressed as a convex combination of pure states, we can, for the purposes of finding the maximum quantum violation, assume that $\rho$ is a pure state.

A more compact way of expressing Bell inequalities is in terms of the expectation values of joint measurements $M_{s_j}$ \cite{beamsplitter, general}, which, for the above projective measurements, are
\begin{eqnarray}\label{qbell}
\mathbb{E}(\textbf{s})&=&\tr(\rho\bigotimes_{j=1}^{n}M_{s_{j}})\nonumber \\
&=&\sum_{m}e^{\frac{i2\pi}{d}(\sum_{j=1}^{n}m_{j})}\tr(\rho\bigotimes_{j=1}^{n}|m_{j}\rangle_{s_{j}}\langle m_{j}|_{s_{j}}) \nonumber \\
&=&\sum_{k=0}^{d-1}e^{\frac{i2\pi k}{d}}p_{\mathcal{Q}}(k|\textbf{s}).
\end{eqnarray}
The correlators $p_{\mathcal{Q}}(k|\textbf{s})$ in \eqref{qbell} can be replaced with LHV correlators to obtain the expectation value of measurements for LHV theories.

Our construction of correlators therefore has a natural role in the construction of expectation values $\mathbb{E}(\textbf{s})$. If we have a Bell inequality in terms of correlators such as \eqref{bell}, it is possible to relate it to a Bell inequality in terms of these expectation values by taking the discrete Fourier transform \cite{general}. The discrete Fourier transform of \eqref{bell} gives
\begin{equation}\label{qbell2}
\sum_{\textbf{s}}\sum_{\mu=0}^{d-1}\eta_{\mu,\textbf{s}}[\mathbb{E}(\textbf{s})]^{\mu}\leq \gamma_{\mathcal{L}}
\end{equation}
with $\gamma_{\mathcal{L}}$ as defined in \eqref{bell} if the complex pre-factors $\eta_{\mu,\textbf{s}}$ are
\begin{equation}
\eta_{\mu,\textbf{s}}=\frac{1}{d}\sum_{k=1}^{d-2}\omega_{k,\textbf{s}}e^{-i2\pi\frac{\mu k}{d}}.
\end{equation}
This construction is another motivation for considering correlators as opposed to the full distribution $p(\textbf{m}|\textbf{s})$.

To obtain the maximal violation of a Bell inequality by quantum correlators, we optimize \eqref{qbell2} over pure states $\rho$ and unitary operators $M_{s_{j}}$ (which correspond to projective measurements). For the case of $(n,2,2)$, all facet Bell inequalities are maximally violated by the GHZ state $|GHZ\rangle=\frac{1}{\sqrt{2}}(|0\rangle^{\otimes n}+|1\rangle^{\otimes n})$ \cite{ww}. In general, finding the maximal violation is a difficult problem, however, there are methods of providing numerical lower and upper bounds on this quantum violation (see e.g. \cite{navascues} and \cite{beamsplitter}). We will now discuss and utilize these methods to find the maximal quantum violations of facet Bell inequalities for two parties.

First we discuss methods of finding a lower bound on a two-party maximal quantum violation used in \cite{beamsplitter} and \cite{acin}. The quantum state is first fixed as the $d^{2}$-dimensional maximally entangled state $|\Psi\rangle=\frac{1}{\sqrt{d}}\sum_{j=0}^{d-1}|jj\rangle$ and we optimize over the unitaries $M_{s_{j}}$. More specifically we write the projectors as $|m_{j}\rangle_{s_{j}}\langle m_{j}|_{s_{j}}=M_{s_{j}}|k\rangle\langle k|M_{s_{j}}^{\dagger}$, where $\{|k\rangle|k\in\mathbb{Z}_{d}\}$ is the standard basis of $\mathcal{H}^D$. The $M_{s_{j}}$ can be written as $M_{s_{j}}=FD_{s_{j}}$ where $F$ is the $d$-by-$d$ Hadamard, or Quantum Fourier Transform matrix and $D_{s_{j}}=\textrm{diag}(e^{i\phi_{1}(s_{j})},e^{i\phi_{2}(s_{j})},...,e^{i\phi_{d}(s_{j})})$, a diagonal matrix with $\phi_{j}(s_{j})$ as real phases. Therefore we optimise over these phases $\phi_{j}(s_{j})$ to numerically maximize the quantum violation for the maximally entangled state.

\begin{table}
\begin{center}
 \begin{tabular}{| c | c | c | c | c | c | c |}
 \hline
 n & c & d & Symmetry class & LHV bound & Quantum bound\\ \hline
 2 & 2 & 2 & $\mathcal{C}_{d=2}$ & 2 & $2.4142^{\dagger}$ \\ \hline
 2 & 2 & 3 & $\mathcal{C}_{CGLMP}$ & 3 & $3.9149$ \\ \hline
 2 & 2 & 5 & $\mathcal{I}_{1}$ & 5 & $6.3145$\\
 2 & 2 & 5 & $\mathcal{I}_{2}$ & 5 & $7.6290$\\
 2 & 2 & 5 & $\mathcal{I}_{3}$ & 5 & $7.0314$\\
 2 & 2 & 5 & $\mathcal{C}_{CGLMP}$ & 5 & $7.0314$\\ \hline
 2 & 3 & 2 & $\mathcal{C}_{c=3}$ & 2 & $2.4142^{\dagger}$ \\
 \hline
 \end{tabular}
\end{center}
\caption{We list the bipartite maximal quantum violations for particular facet Bell inequalities for $c$ and $d$. The inequalities $\mathcal{I}_{1}$, $\mathcal{I}_{2}$ and $\mathcal{I}_{3}$ can be found in the appendix \ref{app1}. Those violations that are achieved with the bipartite maximally entangled state of $d^2$ dimension are labeled with a $\dagger$.}
\label{tab:tab3}
\end{table}

Once we find the optimal phases $\phi_{j}(s_{j})$ for each $s_{j}$, we optimize over the pure states, $\rho$, by finding the largest eigenvalue of the operators corresponding to \eqref{qbell2}. The largest eigenvalue then corresponds to the eigenvector $|\psi\rangle$. Using this method it has been shown that the maximal violation of the CGLMP inequality is achieved with a state that is not maximally entangled \cite{acin}.

One method of providing an upper bound to the violation of a Bell inequality is the use of semi-definite programming \cite{navascues}. A Gram matrix is constructed from the expectation value of products (or sequences) of projectors and this matrix is positive semi-definite. The Bell inequality is then a linear objective function on this matrix subject to linear constraints on elements of the Gram matrix. Navascues et al constructed a hierarchy of semi-definite programs that converge to the set of quantum correlations and also constructed a method for testing for convergence \cite{navascues}.

We used YALMIP and SeDuMi \cite{yalmip} to implement these semi-definite algorithms to find the upper bound to quantum violations whilst finding lower bounds using previously described methods. In Tab. \ref{tab:tab3}, we list the maximal quantum violations of the facet Bell inequalities in symmetry classes described previously; a single value is listed as the upper and lower bounds coincide differ by at most $10^{-9}$, which is consistent with numerical error. Interestingly, of the settings studied, only Bell inequalities for qubits are maximally violated by a maximally entangled bipartite state.

\subsection{Summary}\label{sec3d}

\noindent
Dual to a vertex description of a convex polytope is the facet description where facets are defined by linear inequalities. We have given an insight into some of the complicated structure behind these facet Bell inequalities. Despite the complicated structure, there is a computational insight into the facet Bell inequalities for particular settings. The CGLMP inequalities are computational in nature and can be used to construct facet Bell inequalities for multipartite scenarios. We then showed that these and other bipartite facet Bell inequalities are violated by quantum correlators, which indicates that quantum mechanics offers a computational advantage relative to any LHV theory.

The non-trivial Bell inequalities that we have constructed in this section have an explicit computational meaning. We will explore these inequalities in the context of non-local games and Measurement-based Quantum Computing. More generally we consider how much of the structures developed in this paper can be applied to these information processing scenarios. Whilst a lot of ideas have been generalized from the $(n,2,2)$ setting, we show that LHV correlators are more powerful when $c,d>2$. 

\section{Bell tests, Non-local Games and Quantum Computing}\label{sec4}

\noindent
Connections have been made between Bell tests and Measurement-based Quantum Computing (MBQC) \cite{anders, hoban2}. This connection has been explicitly explored in the setting where each party has a choice of two measurements, each with two outcomes. Furthermore, the role of post-selection in Bell tests simulating MBQC has been explored in this setting \cite{hoban} leading to novel quantum phenomena. In this section we will give an overview of all of these connections and then discuss their implementation in general $(n,c,d)$ settings. So much of the computational insight in constructing LHV correlators in the $(n,2,2)$ setting carries over into the $(n,c,d)$ setting, but the structure of LHV correlators is also richer. This richness also means certain results do not generalize. 

We have already discussed non-local games in the context of the non-trivial Bell inequalities in section \ref{sec3}. There is an overlap between non-local games and MBQC if we consider the elements in both information processing scenarios. Firstly in MBQC there are a number of sites that share a particular resource (e.g. the cluster state \cite{mbqc} for quantum computing), and single-site measurements are made on this resource. All measurement data is processed by a classical computer and in order to achieve a universal quantum computer in current models, adaptive measurements are required; adaptivity means choices of measurements are informed by previous measurement outcomes. In the model developed by Raussendorf and Briegel, the classical computer only needs to be able to perform linear Boolean functions to achieve a universal quantum computer \cite{mbqc}. 

We now recall the model of non-local games as discussed earlier \cite{non-localgame}. There are a number of parties who share some resource state but do not communicate with each other. These parties receive an input from a referee and send an output back. The referee processes the outputs to see if the parties successfully performed some task. MBQC can be recast as a non-local game where the parties choose their measurement based upon the input they receive and send the referee the measurement outcome. The referee in MBQC is a classical computer who processes the measurement data in order to achieve some task. Therefore the referee performs linear Boolean function computation on the measurement data described by bits \cite{hoban2}. 

MBQC has been generalized to include more than a choice of two measurements and two measurement outcomes at each site \cite{highdemmbqc}, thus measurement data is no longer encoded in bits. The classical computer processing measurement data in these models just uses addition modulo $d$ where $d$ is the number of measurement outcomes at each site \cite{highdemmbqc}. Addition modulo $d$ on data is still an extremely limited form of computation, even compared to the $n$-partite linear functions discussed in this paper. 

For simplicity we say that the data sent and received from the measurements sites is encoded as digits in $\mathbb{Z}_{d}$. The specific model of MBQC as a non-local game that we consider consists of three stages:
\begin{enumerate}
\item A digit-string $\textbf{x}\in\mathbb{Z}^{|\textbf{x}|}_{d}$ of length $|\textbf{x}|$ is processed by a classical computer, which then sends a single digit to each site after processing;
\item At each site, this digit is used to select a measurement and the outcome is sent to the classical computer;
\item The classical computer processes this measurement data to output a single digit. 
\end{enumerate}
The processing power of the classical computer is limited to addition modulo $d$. 

The input at each site $s_{j}=h(\textbf{x})$ is now a result of some pre-processing leading to a function $h(\textbf{x})$ on the digit-string $\textbf{x}$ where $|\textbf{x}|\leq n$. This function is limited to addition modulo $d$ on $\textbf{x}$, i.e. $h(\textbf{x})=\sum_{j=1}^{|\textbf{x}|}\alpha_{j}\textbf{x}_{j}$ where $\alpha_{j}\in\mathbb{Z}_{d}$. Then the processing on measurement outcomes $m_{j}$ leads to the output function $\sum_{j=1}^{n}m_{j}$ without loss of generality. 

The non-trivial Bell inequalities described in this paper capture the mean success probability of LHV correlators evaluating a non-$n$-partite linear function $f(\textbf{s})$ for the input $\textbf{s}$. We now investigate how the construction of Bell inequalities is modified by this new element of pre-processing on a digit-string $\textbf{x}$. The pre-processing allows us to express the correlators as being conditioned upon $\textbf{x}$, i.e., $p(k|\textbf{x}):=p(k|\textbf{s}(\textbf{x}))$.

The correlators $p(k|\textbf{x})$ are now elements of a vector $\vec{p}$ in $\mathbb{R}^{(d-1)d^{|\textbf{x}|}}$. We now describe the structure of the region of all possible correlators.

For the digit-string $\textbf{x}$ if we set $n=|\textbf{x}|$ and $s_{j}=x_{j}$ for all $j$, then the possible correlators $p(k|\textbf{x})$ live in the space $\mathcal{P}$ for this setting. If we increase $n$ and consider other forms of linear pre-processing, we allow more freedom to evaluate more complicated functions. As shown in \cite{hoban2} for the pre-processing described above, quantum correlators for $d=2=c$ benefit from this freedom. That is, for $d=2=c$ settings, all deterministic correlators $p(k|\textbf{x})$ corresponding to functions $f(\textbf{x}):\mathbb{Z}_{d}^{|\textbf{x}|}\rightarrow\mathbb{Z}_{d}$ can be achieved for some sufficiently large $n$.

Remarkably, LHV correlators $p(k|\textbf{x})$ for $d=2$ are unaffected by increasing $n$ using the pre-processing described above \cite{hoban}. The deterministic correlators, or vertices of $\mathcal{P}$ for $p(k|\textbf{x})$ that LHV correlators achieve are always the linear Boolean functions. In this way, the computational power of LHV correlators is not boosted by increasing $n$ and pre-processing for $d=2$. We now show that for general $d>2$ this no longer holds and computational power can be boosted. We actually show something stronger: if any form of the pre-processing described above with $s_{j}=h(\textbf{x})$ and $|\textbf{x}|=n$ is allowed, then the computational power of LHV correlators will be boosted.

\begin{prop}\label{prop:pre-processing}
For arbitrary addition modulo $d$ pre-processing on $\textbf{x}$ leading to measurement settings $s_{j}=h(\textbf{x})$, the space of LHV correlators $p(k|\textbf{x})$ is not confined to the convex hull of $n$-partite linear functions on $x$ for $d>2$.
\end{prop}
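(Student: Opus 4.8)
The plan is to produce an explicit local strategy together with a choice of linear pre-processing whose induced correlator is a vertex of the ambient polytope $\mathcal{P}$ (for inputs $\textbf{x}\in\mathbb{Z}_d^{\,n}$) that does \emph{not} correspond to an $n$-partite linear function of $\textbf{x}$, and is therefore not in the convex hull of such functions. First I would record the structural reduction. Composing the pre-processing $s_j=h_j(\textbf{x})$ with Theorem~1, the deterministic (extreme) LHV correlators for $p(k|\textbf{x})$ are precisely those of the form $p(k|\textbf{x})=\delta^{k}_{F(\textbf{x})}$ with $F(\textbf{x})=\sum_{j=1}^{n}g_j(h_j(\textbf{x}))$, where each $g_j:\mathbb{Z}_c\to\mathbb{Z}_d$ is an arbitrary single-site map and each $h_j(\textbf{x})=\sum_k\alpha_{j,k}x_k$ is additive modulo $d$. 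Since the deterministic correlators are exactly the extreme points of $\mathcal{P}$, a deterministic correlator $\delta^{k}_{F(\textbf{x})}$ lies in the convex hull of the $n$-partite linear correlators on $\textbf{x}$ if and only if $F$ is itself an $n$-partite linear function of $\textbf{x}$: a vertex belonging to the convex hull of a set of vertices must coincide with one of them. So it is enough to exhibit a composite $F$ that is non-$n$-partite linear.

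For the construction I would take $|\textbf{x}|=n\ge 2$, use only the first two coordinates (the remaining parties act trivially), let $d$ be an odd prime, and pre-process with $s_1=x_1+x_2$ reduced modulo $d$ (this uses $c\ge d$, consistent with this section's convention that data and settings are digits in $\mathbb{Z}_d$) while all other $s_j$ are constant. Let party $1$ implement the single-site map $g_1(s_1)=s_1^2\bmod d$ and every other party output $0$. The induced function is $F(\textbf{x})=(x_1+x_2)^2=x_1^2+2x_1x_2+x_2^2\pmod d$. Applying the mixed-difference test $F(x_1,x_2)-F(x_1,0)-F(0,x_2)+F(0,0)$ leaves the cross term $2x_1x_2$, which is nonzero at $x_1=x_2=1$ because $d$ is odd; hence $F$ cannot be written as $\tilde g_1(x_1)+\tilde g_2(x_2)$ (nor with extra single-variable terms in the remaining coordinates), so $F$ is non-$n$-partite linear. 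By the reduction above, the corresponding deterministic LHV correlator lies outside the convex hull of the $n$-partite linear functions on $\textbf{x}$, which proves the proposition. (For non-prime $d$ the same strategy works with any nonlinear single-site map whose composition with $h_1$ retains a genuine cross-term, as discussed in appendix~\ref{app2}.)

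I would close by noting why $d>2$ is essential, which is also the only conceptually delicate point. For $d=2$ every map $\mathbb{Z}_2\to\mathbb{Z}_2$ is affine — the only four are $0$, $1$, $x$, $x+1$ — so each composite $g_j\circ h_j$ is affine in $\textbf{x}$, hence $\sum_j g_j(h_j(\textbf{x}))$ is again affine, i.e. $n$-partite linear, and no new correlators appear; this recovers the invariance of LHV correlators under pre-processing of \cite{hoban}. The real work of the proof is therefore not computational but locating the source of the extra expressive power precisely in the existence of a nonlinear single-site map $g$ for which $g(h(\textbf{x}))$ carries a cross-term between coordinates of $\textbf{x}$, and then confirming via the mixed-difference criterion that the cross-term survives and genuinely obstructs $n$-partite linearity; the squaring map does this uniformly for every odd prime $d$.
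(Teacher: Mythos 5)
Your proof is correct and follows essentially the same route as the paper's: compose a nonlinear single-site map with additive pre-processing so that a cross-term such as $x_1x_2$ appears, which is impossible for an $n$-partite linear function of $\textbf{x}$ when $d>2$. You are in fact somewhat more careful than the paper, since you exhibit an explicit witness ($g_1(s_1)=s_1^2$ with $s_1=x_1+x_2$), verify via the mixed-difference test that the cross-term does not cancel, and supply the extreme-point argument showing the resulting vertex of $\mathcal{P}$ cannot lie in the convex hull of the $n$-partite linear vertices.
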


\begin{proof}
As before, we consider the deterministic maps $p(k|\textbf{s})$ and then take their convex hull. However, we now consider the effect of pre-processing on $\textbf{x}$ and make the assumption that $d>2$. When each party receives the input $s_{j}$ generated by the pre-processing, the input is $s_{j}=h(\textbf{x})$ where $h(\textbf{x})=\sum_{j=1}^{|\textbf{x}|}\alpha_{j}x_{j}$ with $\alpha_{j}\in\mathbb{Z}_{d}$. We now map into the space of all correlators $p(k|\textbf{x})$ under the influence of all possible pre-processing of the form $h(\textbf{x})$. 

For LHV correlators, we need to consider all single site maps $\mathbb{Z}_{d}\rightarrow\mathbb{Z}_{d}$ which can be written as a polynomial over $\mathbb{Z}_{d}$ as:
\begin{equation}\label{mapproc}
m_{j}=\sum_{y=1}^{(d-1)}\sum_{z=0}^{(d-1)}\epsilon_{y,z}s_{j}^{d-(z+1)}+\gamma,
\end{equation}
where $\epsilon_{y,z}=\beta_{y}(-1)^{z+1}{{d-1}\choose{z}}y^{z}$ and $\gamma$, $\beta_{y}\in\mathbb{Z}_{d}$. If we add in the pre-processing stage where $s_{j}=\sum_{j=1}^{|x|}\alpha_{j}x_{j}$ then single site maps become polynomials in elements of $x_{j}$, i.e. $s_{j}^{d-(z+1)}=(\sum_{j=1}^{|x|}\alpha_{j}x_{j})^{d-(z+1)}$. Therefore for appropriately chosen $\beta_{y}$ and $\alpha_{j}$, there are now cross-terms between elements of $x_{j}$, e.g. $x_{1}x_{2}$ etc if $d>2$. Because of this the deterministic single-site maps with pre-processing cannot be described as $n$-partite linear functions on $\textbf{x}$. 
\end{proof}

If we just have $s_{j}=x_{j}$ and have $n=|\textbf{x}|$ parties then the LHV polytope is just the convex hull of $n$-partite linear functions on $\textbf{x}$. Pre-processing therefore can boost the computational power of LHV correlators to go beyond this computational description. However, not all polynomials of elements $x_{j}$ over $\mathbb{Z}_{d}$ can be achieved by taking powers of the linear functions $\sum_{j=1}^{|\textbf{x}|}\alpha_{j}x_{j}$. For example, the function $f(\textbf{x})=\prod_{j=1}^{|\textbf{x}|}x_{j}^{d-1}$ cannot be achieved by this method. More generally, any function that contains a term that is a non-$n$-partite linear function of degree greater than $d-1$ cannot be achieved in any LHV theory for any value of $n$. So whilst the LHV correlators are boosted by this pre-processing, the resulting LHV polytope does not encompass the convex hull of all functions $f(\textbf{x})$. 

Knowing that the boosted LHV polytope cannot achieve a function such as $f(\textbf{x})=\prod_{j=1}^{|\textbf{x}|}x_{j}^{d-1}$, we can construct a non-trivial Bell inequality for this function for all $n\geq|\textbf{x}|$. Instead of finding the upper bound for LHV correlators by taking the supremum over all $n$-partite linear functions, we now have to take the supremum over all possible linear pre-processing for all $n$ maps of the form \eqref{mapproc}. This added complication highlights the uniqueness and simplicity of the case for $d=2$, where one only needs to take the supremum over all linear Boolean functions of $\textbf{x}$ even with pre-processing.

In the case where the number of possible inputs at each site is different from the number of possible outputs, there are other complications. We illustrate this with an example in the $(3,2,3)$ setting but with a bit-string $\textbf{x}\in\mathbb{Z}^{2}_{2}$. We have three parties with pre-processing on $\textbf{x}$ leading to inputs: $s_{1}=x_{1}$, $s_{2}=x_{2}$ and $s_{3}=\left[x_{1}+x_{2}\right]_{2}$. With this pre-processing one can achieve the non-$n$-partite linear function $f(\textbf{x})=x_{1}x_{2}$ with the $n$-partite linear function $f(\textbf{s})=2s_{1}+ 2s_{2} + s_{3}$ and this pre-processing as $2x_{1}+2x_{2}+[x_{1}+x_{2}]_{2}=\delta^{x_{1}x_{2}}_{1}$. Therefore even with $c=2$ and $d>2$, the LHV polytope is also no longer confined to $n$-partite linear functions on $\textbf{x}$.

The effect of processing on LHV correlators in the general framework of Bell tests is pointedly different from the two-measurements, two-outcome case. We hope to illustrate that there is a richness in structure in how information processing affects classical correlations and an appreciation of the connections between computation and correlators to give a handle on the effect of processing. In \cite{hoban}, the effect of processing can be seen to have a role in creating so-called loopholes in Bell tests and it is worthwhile exploring loopholes in the framework of data processing described here. 

We also hope these results really emphasize why the $(n,2,2)$ setting is so special and such a powerful platform for distinguishing between classical and non-classical correlators. Even with linear Boolean pre-processing, the classical computation possible with LHV correlators is well characterized. This is still a computationally interesting setting as two inputs and two outputs at each site already enables quantum computing and the richness of structure that comes with that. Despite the fact that MBQC can be generalized to the setting with more inputs and outputs at each site, the same LHV structure does not retain the same character in this setting. 

\section{Summary and Open Questions}\label{sec5}

\noindent
In this paper, we have explored the relatively under-studied multi-setting, multi-outcome Bell inequality test in the context of the computational perspective of Bell tests, which has allowed us to derive new families of non-trivial Bell inequalities. We have presented our work in terms of correlators and not the full measurement statistics of an experiment for clarity and convenience. However, we have shown that a significant amount of the structure of the space of non-signaling probability distributions are captured by this setting. 

We have shown that LHV theories can be generally considered in terms of limited computational expressiveness, i.e. LHV theories can only evaluate $n$-partite linear functions as defined. A violation of a Bell inequality defined by these functions implies a computational advantage relative to any LHV theory. Given this interpretation, we can construct an infinite class of non-trivial Bell inequalities. Furthermore we have explicitly constructed facet Bell inequalities that define the region of LHV correlators using the construction of LHV correlators in this paper. 

Central to this paper is the incorporation of a computational perspective using modulo arithmetic into the more established picture of the convex polytopes of LHV theories. This perspective provides a broad arena to explore and, while we have given some insight, many open questions still remain. Firstly, what is the general correspondence between non-$n$-partite linear functions and facet Bell inequalities? There seems to be a one-to-one or a one-to-two correspondence in certain settings but this is not always the case. There are many happy coincidences that need further clarification.

A huge open question is what exactly defines the quantum region in our space of correlators? For the $(n,2,2)$ correlator Bell test, we know that measurements on GHZ states form the extreme points of the quantum region \cite{ww,zb}. However, little is known for other settings. At the moment finding the eigenvectors for maximal violations of a Bell inequality has not resulted in a general expression for the states that maximally violate the CGLMP inequality, if such a closed form expression even exists. In addition, more study is needed to determine the effect of pre-processing on quantum correlators.

In \cite{hoban}, signaling correlators were simulated with post-selection on measurement data. We have shown that the generalization of this post-selection leads to a boost in the power of LHV correlators; something that is akin to a loophole in a Bell test. However, in the more general setting, does simulating signaling correlators also boost LHV correlators? Are there more nuanced ways of processing data that does not expand the region of LHV correlators as occurs in Proposition \ref{prop:pre-processing}?

If a violation of a Bell inequality is a resource for information processing \cite{qkd, random, comm, anders, hoban2}, then all aspects of Bell tests need to be explored to determine how we can best exploit this resource. We have tried to hint at how the richness of structure of abstract local hidden variable theories can be captured by a simple computational picture. We hope that this work gives more insight into the cooperative relationship between quantum information and the foundations of quantum mechanics. 

\textit{Acknowledgements} - We would like to thank Hussain Anwar, Nicolas Brunner, Miguel Navascu\'{e}s and Stephen Bartlett for interesting discussions. MJH is financially supported by the EPSRC (UK) and DEB acknowledges the support of a Leverhulme Fellowship.

\begin{appendix}
\section{Facet Bell inequalities}\label{app1}

\noindent
In this section we list the facet Bell inequalities which belong to symmetry classes that are not explicitly described in the paper. We express the correlators $p(k|\textbf{s})$ as the elements of a $(d-1)c^{n}$ vector in real space $\mathbb{R}^{(d-1)c^{n}}$. To simplify we describe the inequalities in terms of a vector $\vec{b}$ that has inner product with correlator column vectors $\vec{p}$ to form the Bell inequality, i.e. $\vec{b}\cdot\vec{p}\leq\gamma_{\mathcal{L}}$. The coefficients correspond to elements going from left-to-right as $p(1|0,0,...,0)$, $p(2|0,0,...,0)$ up to $p(d-1|c-1,c-1,...,c-1)$. We will use this notation in the next two sections of the appendix. For the $(2,2,5)$ setting, there are $4$ symmetry classes of facet Bell inequalities (excluding the class of positivity and normalization inequalities). One of these is the CGLMP inequality and the other three are given by the vectors of coefficients
\begin{align}
\vec{b}_1 &= \frac{1}{2}\left(6,2,3,4,4,-2,2,1,4,-2,2,1,-4,2,-2,-1\right),	\nonumber\\
\vec{b}_2 &= \left(3,1,-1,-3,2,-1,-4,-2,2,-1,-4,-2,-2,1,4,2\right),	\nonumber\\
\vec{b}_3 &= \left(2,-1,1,-2,3,1,-1,2,3,1,-1,2,-3,-1,1,-2\right).
\end{align}
Denoting by $\mathcal{I}_j$ the inequality corresponding to $\vec{b}_j$, $\mathcal{I}_1$ and the CGLMP inequality are maximally violated by the vertex corresponding to $f(\textbf{s})=s_{1}s_{2}+1$ and $\mathcal{I}_2$ and $\mathcal{I}_3$ are maximally violated by the vertex corresponding to $f(\textbf{s})=2s_{1}s_{2}+1$.

The LHV bounds for all of the inequalities are listed in Tab. \ref{tab:tab3} with their quantum violation.

As we have seen there is a corresponding function for each of these inequalities that leads to a maximal violation. Each of these functions is a vertex in the general space of Bell correlators mentioned in the paper.

\section{Generalization to non-prime number of settings}\label{app2}

\noindent
In this paper, we focused on the case where $c$ and $d$ are prime. However, this assumption is not needed to prove all the results. We now give a quick overview of how we describe LHV correlators in terms of polynomials on inputs for non-prime $c$. For non-prime number of settings, the number of inputs at each site $c$ can be just expressed as the Cartesian product of the prime factors of $c$, i.e. $\mathbb{Z}_{c}\rightarrow \mathbb{Z}_{c_{1}}\times \mathbb{Z}_{c_{2}}\times ...\times \mathbb{Z}_{c_{q}}$ where $\{c_{1}, c_{2},...,c_{q}\}$ is the set of prime factors of $c$ and $q$ is the number of prime factors in $c$. As a result each input $s_{j}$ can be represented as a string of elements $s_{j}=\{t_{j}^{1},t_{j}^{2},...,t_{j}^{q}\}$ where $t_{j}^{k}\in \mathbb{Z}_{c_{k}}$.

For LHV theories we consider the single site maps $m_{j}$ and then take their sum modulo $d$. For non-prime dimension then, each deterministic single site map can be written as
\begin{equation}
m_{j}=\alpha_{j}(\lambda)+\sum_{\underset{v\neq\textbf{0}}{v\in\mathbb{Z}_{c},}}\beta_{j}^{v}(\lambda)\delta^{s}_{v},
\end{equation}
where $\alpha_{j}(\lambda)\in\mathbb{Z}_{d}$ and $\beta_{j}^{v}(\lambda)\in\mathbb{Z}_{d}$ dependent on the local hidden variable $\lambda$ and $\textbf{0}=\{0, 0,...,0)\}$. We can use the redefinition of the delta function over $\mathbb{Z}_{c}\rightarrow \mathbb{Z}_{c_{1}}\times \mathbb{Z}_{c_{2}}\times ...\times \mathbb{Z}_{c_{q}}$ used for each prime dimension $\mathbb{Z}_{c_{k}}$. This results in the map:
\begin{equation}
m_{j}=\alpha_{j}(\lambda)+\sum_{\underset{v\neq\textbf{0}}{v\in\mathbb{Z}_{c},}}\beta_{j}^{v}(\lambda)\prod_{k=1}^{q}\left[1-(t_{j}^{k}-v_{j}^{k})^{(c_{k}-1)}\right]_{c_{k}}.
\end{equation}
One then takes the sum modulo $d$ of all of these single site maps to obtain (a superficially complicated) map which is $n$-partite linear as it only consists of single-site maps. There is also no multiplication between values of $t_{j}^{k}$ with other values of $t_{j'}^{k'}$ for $j\neq j'$. Again there are $d^{n(c-1)+1}$ possible deterministic maps for LHV theories, and their convex hull forms the LHV polytope $\mathcal{L}$.

\begin{table}
\begin{center}
 \begin{tabular}{| c | c c c c c c c c c c c c c c c c |}
 \hline
 & & & & & & & & $\vec{b}$ & & & & & & & & \\ \hline
 $\mathcal{B}_{1}$& 2&2&1&1&2&-1&-1&-2&1&-1&-2&2&1&-2&2&1\\
 $\mathcal{B}_{2}$& 2&2&1&1&2&-1&-1&-2&1&-2&2&1&1&-1&-2&2\\
 $\mathcal{B}_{3}$& 2&2&1&1&2&-1&-2&-1&1&-2&1&2&1&-1&2&-2\\
 $\mathcal{B}_{4}$& 2&2&1&1&1&-1&2&-2&1&-2&1&2&2&-1&-2&-1\\
 $\mathcal{B}_{5}$& 2&2&1&1&1&-2&2&1&1&-1&-2&2&2&-1&-1&-2\\
 $\mathcal{B}_{6}$& 2&1&1&0&1&-1&-1&1&1&-1&-1&-1&0&1&-1&0\\
 $\mathcal{B}_{7}$& 2&1&1&0&1&-1&-1&1&0&1&-1&0&1&-1&-1&-1\\
 $\mathcal{B}_{8}$& 2&1&1&0&0&1&-1&0&1&-1&-1&1&1&-1&-1&-1\\
 $\mathcal{B}_{9}$& 2&1&0&1&1&-1&1&-1&0&1&0&-1&1&-1&-1&-1\\
 $\mathcal{B}_{10}$& 2&1&0&1&0&1&0&-1&1&-1&1&-1&1&-1&-1&-1\\
 $\mathcal{B}_{11}$& 2&0&1&1&0&0&1&-1&1&1&-1&-1&1&-1&-1&-1\\
$\mathcal{C}^{1}_{c=4}$& 1&1&0&0&1&-1&0&0&0&0&0&0&0&0&0&0\\
$\mathcal{C}^{2}_{c=4}$& 1&1&0&0&0&0&0&0&1&-1&0&0&0&0&0&0\\
$\mathcal{C}^{3}_{c=4}$& 1&0&1&0&0&0&0&0&1&0&-1&0&0&0&0&0\\
 \hline
 \end{tabular}
\end{center}
\caption{The facet Bell inequality expressions that each belong to a particular symmetry class for $(2,4,2)$. Each row corresponds to a particular inequality belonging to a different symmetry class. Each column of $\vec{b}$ is an element of this vector that forms an inner product with $\vec{p}$. The LHV upper bound for inequalities $\mathcal{B}_{1}$, $\mathcal{B}_{2}$, $\mathcal{B}_{3}$, $\mathcal{B}_{4}$, and $\mathcal{B}_{5}$ is $8$ and $4$ for $\mathcal{B}_{6}$, $\mathcal{B}_{7}$, $\mathcal{B}_{8}$, $\mathcal{B}_{9}$, $\mathcal{B}_{10}$, and $\mathcal{B}_{11}$.}
\label{tab:tab4}
\end{table}

The smallest value of non-prime $d$ is $4$ and so we studied the structure of the LHV polytope for the $(n,4,2)$ setting which has $2^{7}$ $n$-partite linear functions corresponding to the vertices of the LHV polytope. There are $27,968$ facet Bell inequalities for this setting found with Polymake \cite{polymake} which reduces to $15$ symmetry classes ($14$ excluding the class of normalization and positivity inequalities). Three of these classes are forms of the CHSH inequality embedded in the larger number of inputs. For completeness, we have listed all $14$ Bell inequalities in Tab. \ref{tab:tab4}. We now explicitly write out one of these inequalities:
\begin{equation}
\mathcal{C}^{1}_{c=4}=\sum_{\textbf{s}}(-1)^{s_{1}s_{2}}\prod_{j=1}^{2}(\delta^{s_{j}}_{0}+\delta^{s_{j}}_{1})p(\sum_{j=1}^{2}m_{j}=1|\textbf{s})\leq 2.
\end{equation}
which is almost exactly the same as $\mathcal{C}^{1}_{c=3}$. The other two inequalities, $\mathcal{C}^{2}_{c=3}$ and $\mathcal{C}^{3}_{c=3}$ are similar to this inequality except with altered delta functions for $\mathcal{C}^{2}_{c=3}$:
\begin{equation}
\prod_{j=1}^{2}(\delta^{s_{j}}_{0}+\delta^{s_{j}}_{1})\rightarrow(\delta^{s_{1}}_{0}+\delta^{s_{1}}_{2})(\delta^{s_{2}}_{0}+\delta^{s_{2}}_{1}),
\end{equation}
and for $\mathcal{C}^{3}_{c=3}$:
\begin{equation}
\prod_{j=1}^{2}(\delta^{s_{j}}_{0}+\delta^{s_{j}}_{1})\rightarrow(\delta^{s_{1}}_{0}+\delta^{s_{1}}_{2})(\delta^{s_{2}}_{0}+\delta^{s_{2}}_{2}).
\end{equation}
%

%%%INCLUDE non-prime outcomes
Throughout the whole of the paper, we have assumed that $d$ has been prime. Indeed, this fact has been vital in the proof of Theorem \ref{thm:generalized-pr}, so this proof fails for non-prime $d$. However, for other aspects of the methods in this paper, a non-prime $d$ has little consequence for our construction of correlators. There are two approaches to considering non-prime $d$: the most obvious approach is just to think of representing outcomes as a string of digits from prime number registers as with the description of settings above; secondly, instead of having the elements $f(\textbf{s})$ of a function as elements of a vector in a vector space over $\mathbb{Z}_{d}$ for prime $d$, we now have elements of a \emph{module} over $\mathbb{Z}_{d}$; a generalization of a vector space. Since we only ever take addition of functions represented in a Kronecker delta function basis to generate all functions, all of our methods generalize to these modules. 

From the aspect of functions, the only difference for non-prime $d$ is the fact that $f(y)$ in \eqref{genfunc} can now take values from a non-prime register $\mathbb{Z}_{d}$. The derivation of all functions and $n$-partite linear functions generalizes naturally. For $(2,2,4)$ with the smallest non-prime $d=4$, we present the facet Bell inequalities. There are four symmetry classes for this setting of which one is the class of positivity and normalization inequalities. One of these three non-trivial symmetry classes is formed by the CGLMP inequality. The second is another generalization of the CHSH inequality and is of the form:
\begin{equation}
\sum_{\textbf{s}}(-1)^{s_{1}s_{2}}\left[p(\sum_{j=1}^{2}m_{j}=1|\textbf{s})+p(\sum_{j=1}^{2}m_{j}=3|\textbf{s})\right]\leq 2,
\end{equation}
which is essentially the CHSH inequality if each party groups their outcomes $m_{j}$ into modulo $2$ terms. In other words, since $1$ mod $2$ is equal to $3$ mod $2$, the above inequality is equivalent to the CHSH inequality if each party just maps from mod $4$ arithmetic to mod $2$.

The third and final symmetry class is generated by the following inequality (expressed in the notation described earlier):
\begin{equation}
\left(1,2,1,1,2,1,1,2,1,-1,-2,-1\right)\cdot \vec{p}\leq 4.
\end{equation}
The deterministic map or function $f(\textbf{s})$ that achieves the maximal upper bound $6$ of this inequality for all correlators in $\mathcal{P}$ is $f(\textbf{s})=2s_{1}s_{2}+2$. It is worth noting that this can be constructed by adding $\sum_{\textbf{s}}2(-1)^{s_{1}s_{2}}p(2|\textbf{s})$ to the previous inequality.

\end{appendix}

\end{document}